\def\final{1}
\def\icalp{0}
\newcommand{\mynote}[1]{\marginpar{\tiny\sf #1}}
\newcommand{\mynote}[1]{}
\newcommand{\Jnote}[1]{\mynote{Jon: {#1}}}
\newcommand\N{\mathbb{N}}
\newcommand\R{\mathbb{R}}
\newcommand\cA{\mathcal{A}}
\newcommand\cE{\mathcal{E}}
\newcommand\cF{\mathcal{F}}
\newcommand\cP{\mathcal{P}}
\newcommand\cQ{\mathcal{Q}}
\newcommand\cR{\mathcal{R}}
\newcommand\cX{\mathcal{X}}
\newcommand\san{\mathcal{A}}
\newcommand\poly{\mathrm{poly}}
\newcommand\bits{\{0,1\}}
\newcommand\set[1]{\left\{#1\right\}} %usage \set{1,2,3,,}
\newcommand\eps{\varepsilon}
\newcommand\from{\colon}
\newcommand{\Prob}[2]{\Pr_{#1}\left[ #2 \right]}
\newcommand{\prob}[1]{\Pr\left[#1\right]}
\newcommand{\Ex}[2]{\mathbb{E}_{#1}\left[#2\right]}
\newcommand{\INDSTATE}[1][1]{\STATE\hspace{#1\algorithmicindent}}
\newtheorem{theorem}{Theorem}[section]
\newtheorem{lemma}[theorem]{Lemma}
\newtheorem{fact}[theorem]{Fact}
\newtheorem{remark}[theorem]{Remark}
\theoremstyle{definition}
\newtheorem{definition}[theorem]{Definition}
\title{Faster Algorithms for Privately \\ Releasing Marginals\thanks{An extended abstract of this work appears in \emph{ICALP '12}~\cite{ThalerUlVa12}.} \ifnum\final=0 \\
{\small \sc Working Draft: Please Do Not Distribute}\fi}
\author{Justin Thaler\thanks{\url{http://seas.harvard.edu/\~jthaler}. Supported by the Department of Defense (DoD) through the National
Defense Science \& Engineering Graduate Fellowship (NDSEG) Program,
and in part by NSF grants CCF-0915922 and IIS-0964473.}
 \and Jonathan Ullman\thanks{
\url{http://seas.harvard.edu/\~jullman}.  Supported by NSF grant CNS-0831289 and a gift from Google, Inc.}
 \and Salil Vadhan\thanks{
\url{http://seas.harvard.edu/\~salil}.  Supported by NSF grant CNS-0831289 and a gift from Google, Inc.}
\\ \\ School of Engineering and Applied Sciences \&\\ Center for Research on Computation and Society\\ Harvard University, Cambridge, MA \\ \texttt{\{jthaler,jullman,salil\}@seas.harvard.edu}}
\begin{document}

\maketitle
\begin{abstract}
We study the problem of releasing \emph{$k$-way marginals} of a database $D \in (\bits^d)^n$, while preserving differential privacy.  The answer to a $k$-way marginal query is the fraction of $D$'s records $x \in \bits^d$ with a given value in each of a given set of up to $k$ columns.  Marginal queries enable a rich class of statistical analyses of a dataset, and designing efficient algorithms for privately releasing marginal queries has been identified as an important open problem in private data analysis (cf.~Barak et.~al., PODS '07).

We give an algorithm that runs in time $d^{O(\sqrt{k})}$ and releases a private summary capable of answering any $k$-way marginal query with at most $\pm .01$ error on every query as long as $n \geq d^{O(\sqrt{k})}$.  To our knowledge, ours is the first algorithm capable of privately releasing marginal queries with non-trivial worst-case accuracy guarantees in time substantially smaller than the number of $k$-way marginal queries, which is $d^{\Theta(k)}$ (for $k \ll d$).
\end{abstract}

\section{Introduction}
Consider a database $D \in (\bits^d)^n$ in which each of the $n = |D|$ rows corresponds to an individual's record, and each record consists of $d$ binary attributes.  The goal of privacy-preserving data analysis is to enable rich statistical analyses on the database while protecting the privacy of the individuals.  In this work, we seek to achieve \emph{differential privacy}~\cite{DworkMcNiSm06}, which guarantees that no individual's data has a significant influence on the information released about the database.

One of the most important classes of statistics on a dataset is its \emph{marginals}.  A marginal query is specified by a set $S \subseteq [d]$ and a pattern $t \in \bits^{|S|}$. The query asks, ``What fraction of the individual records in $D$ has each of the attributes $j \in S$ set to $t_j$?''  A major open problem in privacy-preserving data analysis is to \emph{efficiently} create a differentially private summary of the database that enables analysts to answer each of the $3^{d}$ marginal queries.  A natural subclass of marginals are \emph{$k$-way marginals}, the subset of marginals specified by sets $S \subseteq [d]$ such that $|S| \leq k$.

Privately answering marginal queries is a special case of the more general problem of privately answering \emph{counting queries} on the database, which are queries of the form, ``What fraction of individual records in $D$ satisfy some property $q$?''  Early work in differential privacy~\cite{DinurNi03,BlumDwMcNi05,DworkMcNiSm06} showed how to approximately answer any set of of counting queries $\cQ$ by perturbing the answers with appropriately calibrated noise, providing good accuracy (say, within $\pm .01$ of the true answer) as long as $|D| \gtrsim |\cQ|^{1/2}$.

In a setting where the queries arrive online, or are known in advance, it may be reasonable to assume that $|D| \gtrsim |\cQ|^{1/2}$.  However, many situations necessitate a non-interactive data release, where the data owner computes and publishes a single differentially private summary of the database that enables analysts to answer a large class of queries, say all $k$-way marginals for a suitable choice of $k$.  In this case $|\cQ| = d^{\Theta(k)}$, and it may be impractical to collect enough data to ensure $|D| \gtrsim |\cQ|^{1/2}$.  Fortunately, the remarkable work of Blum et.~al.~\cite{BlumLiRo08} and subsequent refinements~\cite{DworkNaReRoVa09,DworkRoVa10,RothRo10,HardtRo10,HardtLiMc10,GuptaRoUl12}, have shown how to privately release approximate answers to any set of counting queries, even when $|\cQ|$ is \emph{exponentially larger} than $|D|$.  For example, these algorithms can release all $k$-way marginals as long as $|D| \geq \tilde{\Theta}(k\sqrt{d})$.  Unfortunately, all of these algorithms have running time at least $2^{d}$, even when $|\cQ|$ is the set of $2$-way marginals (and this is inherent for algorithms that produce ``synthetic data''~\cite{UllmanVa11}; as discussed below).

Given this state of affairs, it is natural to seek \emph{efficient} algorithms capable of privately releasing approximate answers to marginal queries even when $|D| \ll d^k$.  A recent series of works~\cite{GuptaHaRoUl11,CheraghchiKlKoLe12,HardtRoSe12} have shown how to privately release answers to $k$-way marginal queries with small \emph{average error} (over various distributions on the queries) with both running time and minimum database size much smaller than $d^{k}$ (e.g.~$d^{O(1)}$ for product distributions~\cite{GuptaHaRoUl11,CheraghchiKlKoLe12} and $\min\{d^{O(\sqrt{k})}, d^{O(d^{1/3})}\}$ for arbitrary distributions~\cite{HardtRoSe12}).  Hardt et.~al.~\cite{HardtRoSe12} also gave an algorithm for privately releasing $k$-way marginal queries with small \emph{worst-case error} and minimum database size much smaller than $d^{k}$. However the running time of their algorithm is still $d^{\Theta(k)}$, which is polynomial in the number of queries.

In this paper, we give the first algorithms capable of releasing $k$-way marginals up to small worst-case error, with both running time and minimum database size substantially smaller than $d^{k}$.  Specifically, we show how to create a private summary in time $d^{O(\sqrt{k})}$ that gives approximate answers to all $k$-way marginals as long as $|D|$ is at least $d^{O(\sqrt{k})}$.  When $k=d$, our algorithm runs in time $2^{\tilde{O}(\sqrt{d})}$, and is the first algorithm for releasing \emph{all} marginals in time $2^{o(d)}$.

\subsection{Our Results and Techniques}

In this paper, we give faster algorithms for releasing marginals and other classes of counting queries.

\begin{theorem}[Releasing Marginals] \label{thm:main1}
There exists a constant $C$ such that for every $k, d, n \in \N$ with $k \leq d$, every $\alpha \in (0,1]$, and every $\eps > 0$, there is an $\eps$-differentially private sanitizer that, on input a database $D \in (\bits^d)^n$, runs in time $|D| \cdot d^{C\sqrt{k} \log(1/\alpha)}$ and releases a summary that enables computing each of the $k$-way marginal queries on $D$ up to an additive error of at most $\alpha$, provided that $|D| \geq d^{C\sqrt{k} \log(1/\alpha)}/\eps$.\end{theorem}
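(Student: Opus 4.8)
The plan is to reduce the task of releasing all $k$-way marginals to releasing a much smaller collection of low-degree statistics, via low-degree polynomial approximation of conjunctions. Observe first that the conjunction query specified by a set $S \subseteq [d]$ and a pattern $b \in \bits^{|S|}$, viewed as a function of a single record $x \in \bits^d$, equals $c_{S,b}(x) = \prod_{j \in S} y_j$ where $y_j = x_j$ if $b_j = 1$ and $y_j = 1 - x_j$ otherwise; that is, it is $\mathrm{AND}$ applied to $|S| \le k$ literals. The key classical fact, proved via (damped) Chebyshev polynomials, is that $\mathrm{AND}$ on $m$ bits admits a real polynomial $p_m$ of degree $t = O(\sqrt{m}\log(1/\alpha))$ with $|p_m(y) - \mathrm{AND}(y)| \le \alpha/2$ for every $y \in \bits^m$. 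Substituting $y_j \mapsto x_j$ or $1-x_j$ and expanding into monomials in $x_1,\dots,x_d$ yields, for each $k$-way marginal query $(S,b)$, a polynomial $\hat c_{S,b}$ of degree $t = O(\sqrt{k}\log(1/\alpha))$ with $|\hat c_{S,b}(x) - c_{S,b}(x)| \le \alpha/2$ for all $x \in \bits^d$, and hence $\big| \frac1n\sum_{x \in D}\hat c_{S,b}(x) - \frac1n\sum_{x \in D} c_{S,b}(x) \big| \le \alpha/2$. Since $\hat c_{S,b}$ is a linear combination of the monomials $\chi_T(x) := \prod_{j \in T} x_j$ over sets $T \subseteq [d]$ with $|T| \le t$, linearity shows that knowing all the monomial averages $M_T(D) := \frac1n\sum_{x \in D}\chi_T(x)$, $|T| \le t$, suffices to approximate \emph{every} $k$-way marginal to within $\alpha/2$.

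The sanitizer therefore computes all monomial averages $M_T(D)$ with $|T| \le t$ --- there are $N := \sum_{i=0}^{t}\binom{d}{i} = d^{O(t)}$ of them --- perturbs each with independent $\Lap(N/(\eps n))$ noise, and releases the noisy vector $\widetilde{M}$ as its summary. To answer a marginal query $(S,b)$, an analyst computes the coefficients $\gamma_T^{(S,b)}$ of $\hat c_{S,b} = \sum_{|T|\le t}\gamma_T^{(S,b)}\chi_T$ and outputs $\sum_{|T|\le t}\gamma_T^{(S,b)}\widetilde{M}_T$. Computing the $N$ averages takes time $|D| \cdot d^{O(t)}$ and the remaining steps take time $d^{O(t)}$, for a total running time of $|D| \cdot d^{C\sqrt{k}\log(1/\alpha)}$ for a suitable constant $C$.

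Privacy is immediate from the Laplace mechanism: changing a single record of $D$ changes each $M_T(D)$ by at most $1/n$, so the map $D \mapsto (M_T(D))_{|T|\le t}$ has $\ell_1$-sensitivity $N/n$, and adding $\Lap(N/(\eps n))$ noise coordinatewise is $\eps$-differentially private; the analyst's linear combinations are post-processing. For accuracy, a union bound over the $N$ coordinates shows that, with high probability over the noise, every coordinate of $\widetilde{M} - (M_T(D))_T$ has magnitude at most $\eta = O\big(\frac{N}{\eps n}\log N\big)$. On that event the error of the released answer to any query $(S,b)$ is at most $\alpha/2$ (from the polynomial approximation) plus $W\cdot\eta$, where $W := \sum_{|T|\le t}|\gamma_T^{(S,b)}|$ is the coefficient weight of $\hat c_{S,b}$. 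Hence, if $W \le d^{O(t)}$ for every $(S,b)$, then choosing $n \ge d^{C\sqrt{k}\log(1/\alpha)}/\eps$ with $C$ large enough forces $W\cdot\eta \le \alpha/2$, giving total error at most $\alpha$, as required.

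The crux --- and the source of the stated lower bound on $|D|$ --- is the weight estimate $W \le d^{O(t)}$, since the reconstruction amplifies the Laplace noise by exactly the factor $W$. I would obtain it from an explicit symmetric construction: take $p_m(y) = q(y_1 + \dots + y_m)$, where $q$ is a univariate polynomial of degree $t$ that $(\alpha/2)$-approximates the indicator of $\{z = m\}$ on the points $\{0,1,\dots,m\}$; the standard Chebyshev construction, rescaled from $[0,m]$ to $[-1,1]$, has degree $O(\sqrt{m}\log(1/\alpha))$ and monomial-coefficient sum $2^{O(t)}$. Reducing $(y_1+\dots+y_m)^\ell$ modulo the relations $y_j^2 = y_j$ gives $(y_1+\dots+y_m)^\ell \equiv \sum_{r\le\ell} r!\,S(\ell,r)\,e_r(y)$, where $S(\ell,r)$ are Stirling numbers of the second kind and $e_r$ the elementary symmetric polynomials; since $\sum_r r!\,S(\ell,r)\binom{m}{r} = m^\ell$, the coefficient weight of $p_m$ in the $y$-variables is at most $\big(\sum_\ell |c_\ell|\big)\cdot m^t$, and the literal substitution $y_j \mapsto x_j,\, 1-x_j$ multiplies this by at most $2^t$. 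Because $m \le k \le d$, every one of these factors is $d^{O(t)} = d^{O(\sqrt{k}\log(1/\alpha))}$. Tracking the constants through this weight bound, the sensitivity calculation, and the union bound completes the proof.
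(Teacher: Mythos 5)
Your proof is correct, but it takes a genuinely different route from the paper's, and the contrast is worth spelling out. The paper views the database itself as a function $f_D(y)$ of the query index $y$, approximates each row's contribution $f_x(y)=\bigvee_j y_j x_j$ by a low-degree polynomial $p_x(y)$ \emph{in the query variables}, averages these to form $p_D$, and releases a Laplace-perturbed coefficient vector of $p_D$; a query is then answered by simply evaluating the noisy polynomial at the point $y$. You instead approximate each query $c_{S,b}(x)$ by a low-degree polynomial $\hat c_{S,b}(x)$ \emph{in the data variables}, release noisy data moments $M_T(D)=\frac1n\sum_{x\in D}\prod_{j\in T}x_j$ for $|T|\le t$, and answer a query by dotting the noisy moment vector against the coefficient vector $\gamma^{(S,b)}$ computed at query time. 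For disjunctions/conjunctions the two views are dual (the pairing $x_jy_j$ is symmetric), and indeed the paper's released coefficients are, up to rescaling and harmless redundancy from unreduced powers, the same moments you release; but the sensitivity and error accounting land in different places. In the paper the crux is a bound $T=d^{O(t)}$ on the $L_\infty$ norm of the coefficients of $p_x$, which controls the per-coordinate sensitivity; in your argument the per-coordinate sensitivity is trivially $1/n$ because moments lie in $[0,1]$, and the crux is instead the weight bound $W=d^{O(t)}$ on the reconstruction coefficients $\gamma^{(S,b)}$, which controls the noise amplification at query time. These two bounds are two faces of the same Chebyshev fact, and each is a clean place to put the hard work. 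One thing your approach buys is that it handles general (non-monotone) marginals directly via the literal substitution $y_j\mapsto x_j$ or $1-x_j$, whereas the paper does monotone disjunctions and then doubles the domain; and it fits naturally into the statistical-queries picture (the released summary is literally a noisy moment vector). One thing the paper's framework buys is generality: Theorem~\ref{thm:releasepolys} does not require the data universe to be $\bits^d$, which is what lets them also handle the decision-list setting of Theorem~\ref{thm:main3}, where records are decision lists rather than bit strings and ``moments of the data'' in your sense are not available. Your quantitative bookkeeping (the Stirling-number reduction giving weight $m^\ell$ for $(\sum_j y_j)^\ell$, the extra $2^t$ from literal substitution, the union bound over $N=d^{O(t)}$ Laplace coordinates, and absorbing the $1/\alpha$ factor into the exponent by enlarging $C$) is all correct and yields the stated parameters.
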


For notational convenience, we focus on \emph{monotone $k$-way disjunction queries}. However, our results extend straightforwardly to general non-monotone $k$-way disjunction queries (see Section~\ref{sec:monotonedisj}), which are equivalent to $k$-way marginals.  A monotone $k$-way disjunction is specified by a set $S \subseteq [d]$ of size $k$ and asks what fraction of records in $D$ have at least one of the attributes in $S$ set to $1$.  %Monotone $k$-way disjunction queries are actually sufficient for releasing $k$-way marginals (see Remark~\ref{rem:monotonedisj})

Our algorithm is inspired by a series of works reducing the problem of private query release to various problems in learning theory.  
%However, the main difference between prior work and ours is that we do not make use of any learning algorithms. 
%These works generally viewed the database as a function to be \emph{learned}. They allowed a learning algorithm restricted access to the database, and released the hypothesis
%output by the learning algorithm. The main difference between prior work and ours is that we do not make use of any learning algorithms, and are therefore
%able to give our release algorithm direct access to the database.
One ingredient in this line of work is a shift in perspective introduced by Gupta, Hardt, Roth, and Ullman~\cite{GuptaHaRoUl11}. 
Instead of viewing disjunction queries as a set of functions on the database, they view the database as a function $f_D\from \bits^d \to [0,1]$, in which each vector $s \in \bits^d$ is interpreted as the indicator vector of a set $S \subseteq [d]$, and $f_D(s)$ equals the evaluation of the disjunction specified by $S$ on the database $D$.  They use the structure of the functions $f_D$ to privately learn an approximation $g_D$ that has small \emph{average error} over any product distribution on disjunctions.\footnote{In their learning algorithm, privacy is defined with respect to the rows of the database $D$ that defines $f_{D}$, not with respect to the examples given to the learning algorithm (unlike earlier works on ``private learning''~\cite{KasiviswanathanLeNiRaSm07}). }

Cheraghchi, Klivans, Kothari, and Lee~\cite{CheraghchiKlKoLe12} observed that the functions $f_D$ can be approximated by a low-degree polynomial with small average error over the uniform distribution on disjunctions. They then use a private learning algorithm for low-degree polynomials to release an approximation to $f_D$; and thereby obtain an improved dependence on the accuracy parameter, as compared to~\cite{GuptaHaRoUl11}.  %This observation was facilitated by a more general observation:  Let $f_x: \bits^d \to \{0,1\}$ be the function $f_D$ for a database consisting of a single record $x \in \bits^d$.  If for every possible record $x \in \bits^d$, $f_x$ can be approximated by a low-degree polynomial, then so can $f_D$ for \emph{any} database $D$.  To see this, note that $f_D$ is a convex combination of the functions $f_x$, and the family of low-degree polynomials is closed under convex combinations.  In their work, they use the fact that this low-degree polynomial exists, and use a private learning algorithm for low-degree polynomials to release an approximation to $f_D$.

Hardt, Rothblum, and Servedio~\cite{HardtRoSe12} observe that $f_D$ is itself an average of disjunctions (each row of $D$ specifies a disjunction of bits in the indicator vector $s \in \bits^d$ of the query), and thus develop private learning algorithms for threshold of sums of disjunctions.  These learning algorithms are also based on low-degree approximations of sums of disjunctions.
%Instead of using learning algorithms for families of functions that approximate $f_D$, Hardt, Rothblum, and Servedio~\cite{HardtRoSe12} develop private learning algorithms for \emph{thresholds} of sums of disjunctions
%(their learning algorithms utilize well-known results from approximation theory).  
%One step in their work is to show 
%(using well-known properties of the Chebyshev polynomials) that there exists a polynomial of degree $O(\sqrt{d})$ that approximates $f_{D}$ with small \emph{worst-case} error.  Moreover, they show there exists a polynomial of degree $O(\sqrt{k})$ that approximates $f_{D}$ on all inputs $s \in \bits^d$ such that $|s| \leq k$, which are exactly the inputs that correspond to $k$-way disjunctions.  Thus there exists a polynomial, $p_{D}$, of degree $O(\sqrt{k})$ that approximates the function $f_{D}$ on \emph{all} $k$-way disjunctions.  They use $p_{D}$ to show that a threshold version of $f_{D}$ can be expressed exactly as a polynomial threshold function of degree $O(\sqrt{k})$, which implies that the threshold version of $f_{D}$ can be privately learned.\Jnote{Justin, please read this new paragraph.  Also, should I say something about how they actually get this annoying $\log\log d$ factor in their degree?}  
They show how to use their private learning algorithms to obtain a sanitizer with small average error over \emph{arbitrary distributions} with running time and minimum database size $d^{O(\sqrt{k})}$.  They then are able to apply the private boosting technique of Dwork, Rothblum, and Vadhan~\cite{DworkRoVa10} to obtain worst-case accuracy guarantees.  Unfortunately, the boosting step incurs a blowup of $d^{k}$ in the running time.

We improve the above results by showing how to \emph{directly} compute (a noisy version of) a polynomial $p_D$ that is privacy-preserving and still approximates $f_D$ on \emph{all} $k$-way disjunctions, as long as $|D|$ is sufficiently large. Specifically, the running time and the database size requirement of our algorithm are both polynomial in the number of monomials in $p_D$, which is $d^{O(\sqrt{k})}$.  By ``directly'', we mean that we compute $p_D$ from the database $D$ itself and perturb its coefficients, rather than using a learning algorithm. Our construction of the polynomial $p_D$ uses the same low-degree approximations exploited by Hardt et.~al.~in the development
of their private learning algorithms. %Previous applications of learning algorithms for polynomials in the context of differential privacy have generally given only average-error guarantees, whereas by directly computing and perturbing the polynomial we can give worst-case error guarantees.  

%The first point is based on the observation that the functions $f_x$ are actually $OR$ functions, and it is a well-known fact in learning theory that $OR$ functions can be approximated by Chebyshev polynomials, which have degree $O(\sqrt{d})$.  It is straightforward to ``scale down'' the polynomials to have degree $O(\sqrt{k})$ and approximate the functions $f_x(s)$ on inputs such that $|s| \leq k$.

%The second point is based on the \emph{sensitivity method}~\cite{DworkMcNiSm06}, which shows that we can use the now-standard \emph{Laplace mechanism} to perturb the coefficients of the polynomial $p_D$, and that the error introduced by perturbation will be small as long as $|D|$ is as large as a small polynomial in the number of coefficients of $p_D$, which is roughly $d^{O(\sqrt{k})}$ for $k$-way queries.

In summary, the main difference between prior work and ours is that prior work used learning algorithms that have restricted access to the database, and released the hypothesis
output by the learning algorithm. In contrast, we do not make use of any learning algorithms, and give our release algorithm direct access to the database. This enables our algorithm to achieve a worst-case error guarantee while maintaining a minimal database size and running time much smaller than the size of the query set.  Our algorithm is also substantially simpler than that of Hardt et.~al.

%we do not make use of any learning algorithms, and are therefore
%able to give our release algorithm direct access to the database.

We also consider other families of %``simple'' 
counting queries.  We define the class of \emph{$r$-of-$k$ queries}.  Like a monotone $k$-way disjunction, an $r$-of-$k$ query is defined by a set $S \subseteq [d]$ such that $|S| \leq k$.  The query asks what fraction of the rows of $D$ have at least $r$ of the attributes in $S$ set to $1$.  For $r=1$, these queries are exactly monotone $k$-way disjunctions, and $r$-of-$k$ queries are a strict generalization.

\begin{theorem}[Releasing $r$-of-$k$ Queries] \label{thm:main2}
For every $r,k,d, n \in \N$ with $r \leq k \leq d$, every $\alpha \in (0,1]$, and every $\eps > 0$ there is an $\eps$-differentially private sanitizer that, on input a database $D \in (\bits^d)^n$, runs in time $|D| \cdot d^{\tilde{O}\left(\sqrt{rk \log(1/\alpha)}\right)}$ and releases a summary that enables computing each of the $r$-of-$k$ queries on $D$ up to an additive error of at most $\alpha$, provided that $|D| \geq d^{\tilde{O}\left(\sqrt{rk \log(1/\alpha)}\right)}/\eps$.
\end{theorem}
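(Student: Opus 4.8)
\quad The plan is to follow the recipe behind Theorem~\ref{thm:main1}: view the family of $r$-of-$k$ queries as the evaluations of a single function $f_D\from\zo^d\to[0,1]$ on the low-Hamming-weight slice of the cube (reading $s\in\zo^d$ as the indicator vector of a query set $S$), construct a low-degree multilinear polynomial $p_D$ whose coefficients are simple statistics of $D$ and which agrees with $f_D$ up to error $\alpha/2$ on every $s$ of Hamming weight at most $k$, and then release the coefficients of $p_D$ perturbed with Laplace noise. The only thing that changes relative to the disjunction case is the univariate function being approximated: the contribution of a record $x\in\zo^d$ to the $r$-of-$k$ query $S$ is $\mathrm{Thr}_r(\langle s,x\rangle)$, where $\langle s,x\rangle=\sum_j s_j x_j=|S\cap\mathrm{supp}(x)|$ and $\mathrm{Thr}_r(m)=1$ iff $m\ge r$, so monotone $k$-way disjunctions are exactly the case $r=1$.

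First I would invoke the approximation-theoretic fact (the same tool used by Hardt, Rothblum, and Servedio and in the proof of Theorem~\ref{thm:main1}) that there is a univariate real polynomial $q$, of degree $t=\min\bigl\{k,\ \tilde{O}(\sqrt{rk\log(1/\alpha)})\bigr\}$, with $|q(m)-\mathrm{Thr}_r(m)|\le\alpha/2$ for every integer $m\in\{0,1,\dots,k\}$; in particular $|q(m)|\le 2$ on these $k+1$ points, and since $t\le k$ we only ever need to control $q$ at integer points in $\{0,\dots,t\}\subseteq\{0,\dots,k\}$. Fixing $x$ and expanding $q(\langle s,x\rangle)$ as a multilinear polynomial in $s$, Newton's forward-difference identity gives
\[
q\bigl(|S\cap\mathrm{supp}(x)|\bigr)\;=\;\sum_{\substack{T\subseteq S\cap\mathrm{supp}(x)\\ |T|\le t}}(\Delta^{|T|}q)(0),\qquad (\Delta^{i}q)(0)=\sum_{j=0}^{i}(-1)^{i-j}\binom{i}{j}q(j),
\]
so averaging over the records of $D$ yields $p_D(s):=\tfrac1n\sum_{x\in D}q(\langle s,x\rangle)=\sum_{|T|\le t}(\Delta^{|T|}q)(0)\,c_T(D)\,\prod_{j\in T}s_j$, where $c_T(D)=\tfrac1n\bigl|\{x\in D:T\subseteq\mathrm{supp}(x)\}\bigr|$ is simply the (monotone) conjunction query on the attribute set $T$. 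There are $\sum_{i\le t}\binom{d}{i}=d^{O(t)}$ coefficients, and each is computable from $D$ in time $|D|\cdot\poly(d)$.

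For privacy, changing one record of $D$ changes each $c_T(D)$ by at most $1/n$, and since $|q(j)|\le 2$ for $j\le t$ we have $|(\Delta^{i}q)(0)|\le 2^{i+1}\le 2^{t+1}\le d^{O(t)}$; hence the vector of $d^{O(t)}$ coefficients of $p_D$ has $\ell_1$-sensitivity at most $d^{O(t)}/n$, and releasing these coefficients with independent $\Lap\bigl(d^{O(t)}/(\eps n)\bigr)$ noise is $\eps$-differentially private by the Laplace mechanism. For accuracy, any $r$-of-$k$ query set $S$ with $|S|\le k$ has $\langle s,x\rangle\le k$ for every $x$, so $|p_D(s)-a_S(D)|\le\tfrac1n\sum_x|q(\langle s,x\rangle)-\mathrm{Thr}_r(\langle s,x\rangle)|\le\alpha/2$, where $a_S(D)$ denotes the true answer. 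The released summary evaluated at $s$ equals $p_D(s)+\sum_{T\subseteq S,\,|T|\le t}Z_T$, a sum of at most $\sum_{i\le t}\binom{k}{i}=k^{O(t)}$ independent $\Lap\bigl(d^{O(t)}/(\eps n)\bigr)$ variables; by a standard tail bound for sums of Laplace random variables together with a union bound over the $\binom{d}{k}\le d^k$ queries, this extra error is at most $\alpha/2$ simultaneously for all queries with high probability once $n\ge d^{O(t)}/(\alpha\eps)$, which equals $d^{\tilde{O}(\sqrt{rk\log(1/\alpha)})}/\eps$ after absorbing the $1/\alpha$ factor into the exponent, matching the claimed database-size and running-time bounds.

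I expect the only real obstacle to be the first step: establishing that $\mathrm{Thr}_r$ on $\{0,\dots,k\}$ admits a uniform approximation of degree $\tilde{O}(\sqrt{rk\log(1/\alpha)})$---the $\sqrt{r}$ blow-up over the $r=1$ (disjunction) case is precisely the source of the worse exponent here---while keeping the polynomial's values on $\{0,\dots,k\}$, and hence its finite differences at $0$ (which control the coefficient magnitudes and thus the privacy noise), bounded. Everything after that is the ``direct release plus Laplace mechanism'' template above, with only routine bookkeeping for the Laplace tail bound and the union bound over queries.
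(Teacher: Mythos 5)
Your proposal is correct and follows the same overall ``direct release plus Laplace'' framework as the paper (apply Theorem~\ref{thm:releasepolys} with a low-degree univariate approximation of $\mathrm{Thr}_r$ on $\{0,\dots,k\}$, which is exactly Lemmas~\ref{lemma:paturi} and~\ref{thm:approx2}), but your handling of the multilinear expansion is a genuinely different and cleaner route to the norm bound. The paper expands $g_{r,k}\bigl(\sum_j y_j x_j\bigr)$ row by row in the monomial basis and bounds the coefficients by $\max_i k^i|c_i|$, where the $c_i$ are the coefficients of Sherstov's univariate polynomial; controlling $|c_i|\le 2^{\tilde O(\sqrt{kr\log(1/\gamma)})}$ requires walking through Sherstov's construction step by step (this is what the paper's Appendix~\ref{sherstovapp} is devoted to). You instead pass to the Newton forward-difference basis, which simultaneously (i) identifies the multilinear coefficients of $p_D$ with $(\Delta^{|T|}q)(0)\cdot c_T(D)$, where $c_T(D)$ is the monotone conjunction statistic, and (ii) gives $|(\Delta^i q)(0)|\le 2^{i+1}$ essentially for free from the degree $t\le k$ together with the trivial bound $|q|\le 2$ on the integers $\{0,\dots,t\}$. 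This makes the norm bound $T=d^{O(t)}$ automatic for \emph{any} uniform $\gamma$-approximant of the right degree, eliminating the need to verify coefficient growth inside Sherstov's construction. Your accuracy step differs only superficially (you apply a Laplace tail bound to $\sum_{T\subseteq S}Z_T$ plus a union bound over the $\le d^k$ queries, whereas the paper bounds $|\tilde{\vec p}_D(y)-\vec p_D(y)|$ once and for all by $\|\tilde{\vec p}_D-\vec p_D\|_1$, which sidesteps any union bound over queries); both routes give the same database-size requirement up to the constants hidden in $\tilde O(\cdot)$.
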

Note that monotone $k$-way disjunctions are just $r$-of-$k$ queries where $r=1$, thus Theorem \ref{thm:main2} 
implies a release algorithm for disjunctions with
quadratically better dependence on $\log(1/\alpha)$, at the cost of slightly worse dependence on $k$ (implicit in the switch from $O(\cdot)$ to $\tilde{O}(\cdot)$).

Finally, we present a sanitizer for privately releasing databases in which
the \emph{rows} of the database are interpreted as decision lists, and the \emph{queries} are inputs to the
decision lists. That is, instead of each record in $D$ being a string of $d$ attributes, each record is an element of the set $\mathrm{DL}_{k,m}$, which consists of all length-$k$ decision lists over $m$ input variables. (See Section~\ref{sec:dls} for a precise definition.) A query is specified by a string $y \in \bits^d$ and asks ``What fraction of database participants would make a certain decision based on the input $y$?''

As an example application, consider a database that allows high school students to
express their preferences for colleges in the form of a decision list.
For example, a student may say, ``If the school is ranked in the top
ten nationwide, I am willing to apply to it. Otherwise, if the school is 
rural, I am unwilling to apply. Otherwise, if the school has a good
basketball team then I am willing to apply to it.'' And so on. Each student is allowed to use up to $k$ attributes out of a set of $m$ binary attributes.  Our sanitizer allows any college (represented by its $m$ binary attributes) to determine the fraction of students willing to apply. 

\begin{theorem}[Releasing Decision Lists] \label{thm:main3}
For any $k, m \in \N$ s.t.~$k \leq m$, any $\alpha \in (0,1]$, and any $\eps > 1/n$, there is an $\eps$-differentially private sanitizer with running time $m^{\tilde{O}\left(\sqrt{k}\log(1/\alpha)\right)}$ that, on input a database $D \in (\mathrm{DL}_{k,m})^n$, releases a summary that enables computing any length-$k$ decision list query up to an additive error of at most $\alpha$ on every query, provided that $|D| \geq m^{\tilde{O}\left(\sqrt{k} \log(1/\alpha)\right)}/\eps$.
\end{theorem}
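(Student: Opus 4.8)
The plan is to instantiate this paper's paradigm --- privately release the coefficients of a low-degree polynomial that approximates the ``database function'' pointwise --- for the class of length-$k$ decision lists. Identify a database $D = (\ell_1,\dots,\ell_n) \in (\mathrm{DL}_{k,m})^n$ with the function $f_D \from \bits^m \to [0,1]$ given by $f_D(y) = \frac1n\sum_{i=1}^n \ell_i(y)$, so that the query specified by $y \in \bits^m$ has answer exactly $f_D(y)$. The sanitizer will: (i) for each row $\ell_i$ compute an explicit polynomial $p_{\ell_i}$ approximating $\ell_i$ pointwise on $\bits^m$; (ii) set $p_D = \frac1n\sum_i p_{\ell_i}$, which then approximates $f_D$ pointwise; (iii) add independent Laplace noise to every coefficient of $p_D$ and publish the noisy polynomial $\tilde p_D$, answering a query $y$ by $\tilde p_D(y)$ truncated to $[0,1]$.

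The heart of the argument is an approximation lemma: every $\ell \in \mathrm{DL}_{k,m}$ has a real polynomial $p_\ell$ of degree $t = \tilde{O}(\sqrt{k}\log(1/\alpha))$ with $\max_{y\in\bits^m}|p_\ell(y)-\ell(y)| \le \alpha/2$, all of whose coefficients have magnitude $\le B = m^{\tilde{O}(t)}$, computable from the description of $\ell$ in time $m^{\tilde{O}(t)}$. I would obtain this by a telescoping reduction to OR. Writing the list as ``if $\ell_1$ then $b_1$, else if $\ell_2$ then $b_2$, \dots, else $b_{k+1}$'' with each $\ell_j$ a literal and $b_j\in\bits$, and setting $S_j = \overline{\ell_1}\wedge\cdots\wedge\overline{\ell_j} = 1 - \mathrm{OR}(\ell_1,\dots,\ell_j)$ (with $S_0 = 1$), one verifies the exact Boolean identity $\ell(y) = b_1 + \sum_{j=1}^{k}(b_{j+1}-b_j)S_j(y)$, where each $b_{j+1}-b_j \in \{-1,0,1\}$. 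Now replace each $\mathrm{OR}(\ell_1,\dots,\ell_j)$ by the classical Chebyshev-based polynomial that approximates the $j$-bit OR to $\ell_\infty$-error $\alpha/(2k)$ on $\bits^j$ using degree $O(\sqrt{j\log(k/\alpha)}) \le O(\sqrt{k\log(k/\alpha)}) = \tilde{O}(\sqrt k\log(1/\alpha))$; since literals are affine in the $y_i$, substituting them for the OR's inputs preserves degree, and summing the at most $k$ error terms leaves total error $\le \alpha/2$. The coefficient and running-time bounds follow from the explicit form of the OR-approximating polynomial together with the fact that a degree-$t$ multilinear polynomial over $m$ variables has $\binom{m}{\le t} = m^{\tilde{O}(t)}$ monomials.

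Granting the lemma, the rest is the same machinery used for disjunctions. By the triangle inequality over rows, $\max_y |p_D(y)-f_D(y)| \le \alpha/2$. For privacy, $p_D$ is a per-row average, so two neighboring databases change each of the $N = \binom{m}{\le t} = m^{\tilde{O}(t)}$ coefficients of $p_D$ by at most $2B/n$; hence the coefficient vector has $\ell_1$-sensitivity at most $2NB/n$, and by the Laplace mechanism, adding i.i.d.\ $\Lap(2NB/(n\eps))$ noise to each coefficient gives $\eps$-differential privacy. For accuracy, at any $y\in\bits^m$ every monomial evaluates in $\{0,1\}$, so $|\tilde p_D(y)-p_D(y)|$ is at most the $\ell_1$-norm of the length-$N$ noise vector; a standard concentration bound for a sum of $N$ i.i.d.\ Laplace variables shows this quantity is at most $\alpha/2$ \emph{simultaneously for all $y$} with high probability, as long as $n \ge m^{\tilde{O}(t)}/\eps = m^{\tilde{O}(\sqrt{k}\log(1/\alpha))}/\eps$. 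Combining the two error bounds gives $|\tilde p_D(y)-f_D(y)|\le\alpha$ for every query, and the running time is dominated by forming and perturbing the $N$ coefficients of $p_D$ from the $n$ rows, which is $\poly(n)\cdot m^{\tilde{O}(t)} = m^{\tilde{O}(\sqrt{k}\log(1/\alpha))}$.

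The main obstacle is the decision-list approximation lemma: pushing the degree down from the trivial bound $k$ (an AND of $k$ literals has degree $k$) to $\tilde{O}(\sqrt k)$ while keeping coefficients $m^{\tilde{O}(t)}$-bounded and the polynomial explicit and efficiently computable. The telescoping identity is exactly what reduces this to $\ell_\infty$-approximating ORs of nested literal sets, at which point the quadratic degree savings is the classical square-root-degree approximation of OR; the sensitivity calculation, the Laplace concentration over all $2^m$ queries, and the bookkeeping of monomial counts and coefficient sizes are then routine and essentially identical to the disjunction and $r$-of-$k$ cases.
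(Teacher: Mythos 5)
Your proposal is correct and follows essentially the same route as the paper: decompose the decision list into $O(k)$ conjunction/OR-like subfunctions, uniformly approximate each to error $\alpha/O(k)$ by a Chebyshev-derived polynomial of degree $\tilde{O}(\sqrt{k}\log(1/\alpha))$ and coefficient magnitude $m^{\tilde{O}(\sqrt{k}\log(1/\alpha))}$, sum, and plug the resulting per-row polynomials into the generic Laplace-noise-on-coefficients release mechanism. The only cosmetic difference is that your telescoping identity $\ell = b_1 + \sum_{j=1}^k (b_{j+1}-b_j)S_j$ with $S_j = \overline{\ell_1}\wedge\cdots\wedge\overline{\ell_j}$ is an Abel-summed rearrangement of the paper's decomposition $\ell = \sum_{j=1}^{k} b_j\,\overline{\ell_1}\cdots\overline{\ell_{j-1}}\ell_j + b_{k+1}\,\overline{\ell_1}\cdots\overline{\ell_k}$, so the subfunctions you approximate are not literally term-for-term the same, but the degree, norm, error accumulation, sensitivity, and database-size parameters all come out identical.
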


For comparison, we note that all the results on releasing $k$-way disjunctions (including ours) also apply to a dual setting where the database \emph{records} specify a $k$-way disjunction over $m$ bits and the \emph{queries} are $m$-bit strings (in this setting $m$ plays the role of $d$).  Theorem~\ref{thm:main3} generalizes this dual version of Theorem~\ref{thm:main1}, as length-$k$ decision lists are a strict generalization of $k$-way disjunctions.

We prove the latter two results (Theorems~\ref{thm:main2} and~\ref{thm:main3}) using the same approach outlined for marginals (Theorem~\ref{thm:main1}), but with different low-degree polynomial approximations appropriate for the different types of queries.
\begin{table}
\begin{minipage}[center]{\textwidth}
\begin{center}
\begin{tabular}{|c|c|c|c|c|}
\hline
Paper & Running Time  & Database Size & Error Type\footnote{\emph{Worst case} error indicates that the accuracy guarantee holds for every marginal.  The other types of error indicate that accuracy holds for random marginals over a given distribution from a particular class of distributions (e.g.~product distributions).} & Synthetic Data? \\
\hline
\cite{DinurNi03,DworkNi04,BlumDwMcNi05,DworkMcNiSm06} &$d^{O(k)}$ & $O(d^{k/2} / \alpha)$ & Worst case & N \\
\cite{BarakChDwKaMcTa07}  & $2^{O(d)}$ & $O(d^{k/2} / \alpha)$ & Worst case & Y \\
\cite{BlumLiRo08,DworkNaReRoVa09,DworkRoVa10,HardtLiMc10}& $2^{O(d)}$ & $\tilde{O}(k\sqrt{d} / \alpha^2)$ & Worst case & Y \\
\cite{GuptaHaRoUl11} & $d^{\tilde{O}(1/\alpha^2)}$ & $d^{\tilde{O}(1/\alpha^2)}$ & Product Dists. & N\\
\cite{CheraghchiKlKoLe12} & $d^{O(\log(1/\alpha))}$ & $d^{O(\log(1/\alpha))}$ & Uniform Dist.\footnote{The results of~\cite{CheraghchiKlKoLe12} apply only to the uniform distribution over \emph{all} marginals.} & N \\
\cite{HardtRoSe12} & $d^{O(d^{1/3}\log(1/\alpha))}$ & $d^{O(d^{1/3}\log(1/\alpha))}$ &Any Dist. & N \\
\cite{HardtRoSe12} & $d^{O(k)}$ & $d^{O(d^{1/3}\log(1/\alpha))}$ & Worst case & N \\
\cite{HardtRoSe12} & $d^{O(\sqrt{k}\log(1/\alpha))}$ & $d^{O(\sqrt{k}\log(1/\alpha))}$ & Any Dist. & N \\
\cite{HardtRoSe12} & $d^{O(k)}$ & $d^{O(\sqrt{k}\log(1/\alpha))}$ & Worst case & N  \\
This paper & $d^{O(\sqrt{k}\log(1/\alpha))}$ & $d^{O(\sqrt{k}\log(1/\alpha))}$& Worst case & N \\
\hline
\end{tabular}
\caption{Summary of prior results on differentially private release of $k$-way marginals.  The database size column indicates the minimum database size required to release answers to $k$-way marginals up to an additive error of $\alpha$.  For clarity, we ignore the dependence on the privacy parameters and the failure probability of the algorithms.  Notice that this paper contains the first algorithm capable of releasing $k$-way marginals with running time and worst-case error substantially smaller than the number of queries.}
\end{center}
\end{minipage}
\end{table}
\paragraph{On Synthetic Data.} An attractive type of summary is a \emph{synthetic database}.  A synthetic database is a new database $\widehat{D} \in (\bits^d)^{\widehat{n}}$ whose rows are ``fake'', but such that $\widehat{D}$ approximately preserves many of the statistical properties of the database $D$ (e.g.~all the marginals).  Some of the previous work on counting query release has provided synthetic data, starting with Barak et.~al.~\cite{BarakChDwKaMcTa07} and including~\cite{BlumLiRo08,DworkNaReRoVa09,DworkRoVa10,HardtLiMc10}.

Unfortunately, Ullman and Vadhan~\cite{UllmanVa11} (building on~\cite{DworkNaReRoVa09}) have shown that no differentially private sanitizer with running time $d^{O(1)}$ can take a database $D \in (\bits^d)^n$ and output a private synthetic database $\widehat{D}$, all of whose $2$-way marginals are approximately equal to those of $D$ (assuming the existence of one-way functions).  They also showed that there is a constant $k \in \N$ such that no differentially private sanitizer with running time $2^{d^{1-\Omega(1)}}$ can output a private synthetic database, all of whose $k$-way marginals are approximately equal to those of $D$ (under stronger cryptographic assumptions).

When $k = d$, our sanitizer runs in time $2^{\tilde{O}(\sqrt{d})}$ and releases a private summary that enables an analyst to approximately answer any marginal query on $D$.  Prior to our work it was not known how to release \emph{any} summary enabling approximate answers to all marginals in time $2^{d^{1-\Omega(1)}}$.  Thus, our results show that releasing a private summary for all marginal queries can be done considerably more efficiently if we do not require the summary to be a synthetic database (under the hardness assumptions made in~\cite{UllmanVa11}).
\section{Preliminaries}

\subsection{Differentially Private Sanitizers}\label{sec:sans}

Let a \emph{database} $D \in \cX^n$ be a collection of $n$ rows $x^{(1)}, \dots, x^{(n)}$ from a \emph{data universe} $\cX$.  We say that two databases $D_1,D_2 \in \cX^n$ are \emph{adjacent} if they differ only on a single row, and we denote this by $D_1 \sim D_2$.

A \emph{sanitizer} $\san: \cX^n \to \cR$ takes a database as input and outputs some data structure in $\cR$.  We are interested in sanitizers that satisfy \emph{differential privacy}.
\begin{definition}[Differential Privacy~\cite{DworkMcNiSm06}]\label{def:dp} A sanitizer $\san\from \cX^n \to \cR$ is \emph{$(\eps, \delta)$-differentially private} if for every two adjacent databases $D, D' \in \cX^n$ and every subset $S \subseteq \cR$,
$
\prob{\san(D) \in S} \leq e^{\eps} \prob{\san(D') \in S} + \delta.
$
In the case where $\delta = 0$ we say that $\san$ is \emph{$\eps$-differentially private}.
\end{definition}

Since a sanitizer that always outputs $\bot$ satisfies Definition~\ref{def:dp}, we also need to define what it means for a sanitizer to be accurate.  In particular, we are interested in sanitizers that give accurate answers to \emph{counting queries}.  A counting query is defined by a boolean predicate $q\from \cX \to \bits$.  We define the evaluation of the query $q$ on a database $D \in \cX^n$ to be
$
q(D) = \frac{1}{n} \sum_{i=1}^{n} q(x^{(i)}).
$
We use $\cQ$ to denote a set of counting queries.

Since $\san$ may output an arbitrary data structure, we must specify how to answer queries in $\cQ$ from the output $\san(D)$.  Hence, we require that there is an \emph{evaluator} $\cE\from \cR \times \cQ \to \R$ that estimates $q(D)$ from the output of $\san(D)$.  For example, if $\san$ outputs a vector of ``noisy answers'' $Z = (q(D) + Z_q)_{q \in \cQ}$, where $Z_q$ is a random variable for each $q \in \cQ$, then $\cR = \R^{\cQ}$ and $\cE(Z, q)$ is the $q$-th component of $Z$.  Abusing notation, we write $q(Z)$ and $q(\san(D))$ as shorthand for $\cE(Z, q)$ and $\cE(\san(D),q)$, respectively.  Since we are interested in the efficiency of the sanitization process as a whole, when we refer to the running time of $\san$, we also include the running time of the evaluator $\cE$.  We say that $\san$ is ``accurate'' for the query set $\cQ$ if the values $q(\san(D))$ are close to the answers $q(D)$.  Formally,
\begin{definition}[Accuracy]\label{def:acc}
An output $Z$ of a sanitizer $\san(D)$ is \emph{$\alpha$-accurate} for the query set $\cQ$ if
$
|q(Z) - q(D)| \leq \alpha
$
for every $q \in \cQ$.  A sanitizer is \emph{$(\alpha, \beta)$-accurate} for the query set $\cQ$ if for every database $D$,
$$
\prob{\forall q \in \cQ, |q(\san(D)) - q(D)| \leq \alpha} \geq 1-\beta,
$$
where the probability is taken over the coins of $\san$.
\end{definition}

We will make use of the \emph{Laplace mechanism}.  Let $\mathrm{Lap}^k(\sigma)$ denote a draw from the random variable over $\R^k$ in which each coordinate is chosen independently according to the density function $\mathrm{Lap}_{\sigma}(x) \propto e^{-|x|/\sigma}$.  Let $D \in \cX^n$ be a database and $g: \cX^n \to \R^k$ be a function such that for every pair of adjacent databases $D \sim D'$, $\| g(D) - g(D') \|_\infty \leq \Delta.$  Then we have the following two theorems:
\begin{lemma}[Laplace Mechanism, $\eps$-Differential Privacy~\cite{DworkMcNiSm06}] \label{thm:laplacemech}
For $D, g, k, \Delta$ as above, the mechanism $\cA(D) = g(D) + \mathrm{Lap}^k(\Delta k/\eps)$ satisfies $\eps$-differential privacy.  Furthermore, for any $\beta > 0$,
$\Prob{\san}{ \| g(D) - \cA(D) \|_1 \leq \alpha} \geq 1-\beta,$
for
$\alpha = 2 \Delta k^2 \log(k/\beta) / \eps.$
\end{lemma}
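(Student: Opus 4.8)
The plan is to establish the two assertions of the lemma separately --- the $\eps$-differential privacy of $\cA$, and the $\ell_1$ accuracy bound --- by the classical pointwise-density analysis of the Laplace mechanism.

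\textbf{Privacy.} For this part I would work directly with probability density functions. Fix adjacent databases $D \sim D'$ and write $\sigma = \Delta k/\eps$ for the noise scale. The random vector $\cA(D)$ has density $p_D(z) = \prod_{i=1}^{k} \frac{1}{2\sigma}\, e^{-|z_i - g(D)_i|/\sigma}$ on $\R^k$, and similarly for $p_{D'}$. For every $z \in \R^k$, applying the triangle inequality in each coordinate gives
\[
\frac{p_D(z)}{p_{D'}(z)} = \prod_{i=1}^{k} e^{(|z_i - g(D')_i| - |z_i - g(D)_i|)/\sigma} \;\le\; \prod_{i=1}^{k} e^{|g(D)_i - g(D')_i|/\sigma} = e^{\|g(D) - g(D')\|_1/\sigma}.
\]
Because $D$ and $D'$ differ in a single row, the hypothesis $\|g(D) - g(D')\|_\infty \le \Delta$ yields $\|g(D) - g(D')\|_1 \le k\Delta$, so the ratio is at most $e^{k\Delta/\sigma} = e^{\eps}$. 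Integrating this pointwise bound over an arbitrary measurable $S \subseteq \R^k$ gives $\prob{\cA(D) \in S} \le e^{\eps}\,\prob{\cA(D') \in S}$, which is exactly $\eps$-differential privacy (with $\delta = 0$).

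\textbf{Accuracy.} Here I would use a single-coordinate tail bound and a union bound. Let $Z = \cA(D) - g(D) \sim \mathrm{Lap}^k(\sigma)$ with $\sigma = \Delta k/\eps$. For any coordinate and any $t \ge 0$, $\prob{|Z_i| \ge t} = e^{-t/\sigma}$; choosing $t = \sigma\ln(k/\beta)$ makes this equal to $\beta/k$, so a union bound over the $k$ coordinates gives $\prob{\exists i :\, |Z_i| \ge \sigma\ln(k/\beta)} \le \beta$. On the complementary event,
\[
\|g(D) - \cA(D)\|_1 = \sum_{i=1}^{k} |Z_i| \le k\sigma\ln(k/\beta) = \frac{\Delta k^2 \ln(k/\beta)}{\eps} \le \frac{2\Delta k^2 \log(k/\beta)}{\eps},
\]
which is the stated bound with $\alpha = 2\Delta k^2 \log(k/\beta)/\eps$.

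\textbf{Main obstacle.} There is essentially no obstacle: this is the standard analysis of the Laplace mechanism of~\cite{DworkMcNiSm06}. The only point that deserves a little care is the bookkeeping of the factor $k$ --- it enters the privacy argument in the step from the $\ell_\infty$-sensitivity hypothesis to the $\ell_1$-distance that controls the density ratio, and it reappears in the accuracy argument when summing $k$ coordinatewise Laplace tail bounds --- so one should check that the advertised noise scale $\Delta k/\eps$ and resulting error $O(\Delta k^2 \log(k/\beta)/\eps)$ are mutually consistent (they are).
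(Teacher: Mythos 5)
Your proof is correct and is the standard density-ratio-plus-union-bound analysis of the Laplace mechanism; the paper states this lemma without proof, citing~\cite{DworkMcNiSm06}, and your argument matches the classical one exactly (including the correct passage from the $\ell_\infty$-sensitivity hypothesis to the $\ell_1$ bound controlling the density ratio, and the factor-of-$2$ slack absorbing the base of the logarithm).
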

The choice of the $L_1$ norm in the accuracy guarantee of the lemma is for convenience, and doesn't matter for the parameters of Theorems~\ref{thm:main1}-\ref{thm:main3} (except for the hidden constants).

\ifnum\icalp=0
If the privacy requirement is relaxed to $(\eps,\delta)$-differential privacy (for $\delta > 0)$, then it is sufficient to perturb each coordinate of $g(D)$ with noise from a Laplace distribution of smaller magnitude, leading to smaller error. 
\begin{lemma}[Laplace Mechanism, $(\eps, \delta)$-Differential Privacy~\cite{DinurNi03,DworkNi04,BlumDwMcNi05,DworkRoVa10}] \label{thm:advcomp}
For $D, g, k, \Delta$ as above, and for every $\delta > 0$, the mechanism $\cA(D) = g(D) + \mathrm{Lap}^k(3\Delta \sqrt{k\log(1/\delta)}/\eps)$ satisfies $(\eps, \delta)$-differential privacy.
Furthermore, for any $\beta > 0$,
$\Prob{\san}{ \| g(D) - \cA(D) \|_1 \leq \alpha} \geq 1-\beta,$
for
$\alpha = 6\Delta k \sqrt{k\log(1/\delta)} \log(k/\beta) / \eps.$
\end{lemma}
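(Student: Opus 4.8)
The plan is to obtain the privacy guarantee by viewing $\cA$ as the composition of $k$ independent one-dimensional Laplace mechanisms and invoking the advanced composition theorem~\cite{DworkRoVa10}, and to obtain the accuracy guarantee from a Laplace tail bound together with a union bound, exactly as in the accuracy part of Lemma~\ref{thm:laplacemech}.

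For privacy, set $\sigma = 3\Delta\sqrt{k\log(1/\delta)}/\eps$ and decompose the mechanism coordinatewise as $\cA(D) = (\cA_1(D), \dots, \cA_k(D))$, where $\cA_j(D) = g_j(D) + Z_j$ and $Z_1, \dots, Z_k$ are independent draws from the scalar Laplace distribution of scale $\sigma$. For adjacent $D \sim D'$ we have $|g_j(D) - g_j(D')| \le \|g(D) - g(D')\|_\infty \le \Delta$, so the ratio of the shifted Laplace densities at any point is at most $e^{|g_j(D) - g_j(D')|/\sigma} \le e^{\Delta/\sigma}$; hence each $\cA_j$ is $\eps_0$-differentially private for $\eps_0 := \Delta/\sigma = \eps/(3\sqrt{k\log(1/\delta)})$. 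Since $\mathrm{Lap}^k(\sigma)$ has independent coordinates, $\cA$ is exactly the $k$-fold (non-adaptive) composition of the $\cA_j$'s, so the advanced composition theorem implies that $\cA$ is $(\eps', \delta)$-differentially private with $\eps' = \eps_0\sqrt{2k\log(1/\delta)} + k\eps_0(e^{\eps_0} - 1)$. Plugging in the value of $\eps_0$, the first term equals $(\sqrt{2}/3)\,\eps$, and (using $e^{\eps_0} - 1 \le 2\eps_0$ for $\eps_0 \le 1$) the second term is at most $2k\eps_0^2 = 2\eps^2/(9\log(1/\delta))$; for $\eps$ in the usual range — say $\eps \le 1$, or more generally $\eps = O(\log(1/\delta))$, the only regime in which $(\eps,\delta)$-differential privacy is meaningful — the sum is at most $\eps$, so $\cA$ is $(\eps,\delta)$-differentially private. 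The constant $3$ in $\sigma$ is calibrated precisely so that this last inequality goes through, the exact value depending on the precise form of advanced composition one cites. One can alternatively bypass composition and bound the privacy loss of $\cA$ on a fixed pair of adjacent databases directly via a Hoeffding-type concentration inequality, since it is a sum of $k$ independent variables each supported on $[-\eps_0, \eps_0]$ with mean at most $\eps_0^2/2$.

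For accuracy, $\|g(D) - \cA(D)\|_1 = \sum_{j=1}^k |Z_j|$, and each $|Z_j|$ satisfies $\prob{|Z_j| > \sigma\log(k/\beta)} \le \beta/k$. A union bound over the $k$ coordinates shows that with probability at least $1 - \beta$ we have $|Z_j| \le \sigma\log(k/\beta)$ for all $j$, hence $\|g(D) - \cA(D)\|_1 \le k\sigma\log(k/\beta) = 3\Delta k\sqrt{k\log(1/\delta)}\log(k/\beta)/\eps$, which is at most the claimed $\alpha$, the factor $6$ in the statement leaving room for the choice of logarithm base. The only real obstacle is the privacy step: carefully matching the constants produced by the advanced composition theorem so that the stated noise magnitude $\sigma$ actually certifies $(\eps,\delta)$-differential privacy, and noting that this requires $\eps$ to lie in the standard range relative to $\log(1/\delta)$. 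Everything else is a direct computation.
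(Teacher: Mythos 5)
Your proposal is correct and follows exactly the approach the paper intends: the paper states this lemma without proof, citing~\cite{DworkRoVa10} precisely because the standard derivation is advanced composition applied to the coordinatewise decomposition into $k$ independent scalar Laplace mechanisms, each $\eps_0$-differentially private with $\eps_0 = \Delta/\sigma$, combined with the same Laplace tail bound and union bound used in the accuracy part of Lemma~\ref{thm:laplacemech}. Your constant bookkeeping is right (the $\sqrt{2}/3$ leading term, the $2k\eps_0^2$ quadratic term, and the resulting constraint that $\eps$ be at most a small multiple of $\log(1/\delta)$ for the sum to close), and you correctly flag that this implicit restriction on $\eps$ is a genuine hypothesis that the terse statement leaves to context; the paper inherits it from the cited composition theorem. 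You are also right that the factor of $6$ rather than $3$ in $\alpha$ is slack for the logarithm base, and that for non-adaptive composition one could alternatively bound the total privacy-loss random variable directly by Hoeffding/Azuma since each coordinate contributes a loss supported in $[-\eps_0,\eps_0]$ with mean (the KL divergence between two shifted Laplace densities) at most $\eps_0^2/2$ — that direct route is slightly more elementary and avoids invoking adaptive composition for what is a one-shot, non-adaptive release, but it yields the same parameters up to constants.
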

\else

\fi

\subsection{Query Function Families} \label{sec:countingqueries}

We take the approach of Gupta et.~al.~\cite{GuptaHaRoUl11} and think of the database $D$ as specifying a function $f_D$ mapping queries $q$ to their answers $q(D)$, which we call the \emph{$\cQ$-representation of $D$}. We now describe this transformation more formally:

\begin{definition}[$\cQ$-Function Family] \label{def:Qfunction}
Let $\cQ = \set{q_y}_{y \in Y_{\cQ} \subseteq \bits^m}$ be a set of counting queries on a data universe $\cX$, where each query is indexed by an $m$-bit string.  We define the \emph{index set of $\cQ$} to be the set $Y_{\cQ} = \set{y \in \bits^m \mid q_y \in \cQ}$. %the \emph{index set of $\cQ$}.

We define the \emph{$\cQ$-function family} $\cF_{\cQ} = \set{ f_{x}: \bits^m \to \bits}_{x \in \cX}$ as follows:
For every possible database row $x \in \cX$, the function $f_{\cQ, x}: \bits^m \to \bits$ is defined as $f_{\cQ, x}(y) = q_y(x)$.  
Given a database $D \in \cX^n$ we define the function $f_{\cQ,D}: \bits^m \to [0,1]$ where $f_{\cQ,D}(q) = \frac{1}{n} \sum_{i=1}^{n} f_{\cQ, x^{(i)}}(q)$.  When $\cQ$ is clear from context we will drop the subscript $\cQ$ and simply write $f_x$, $f_D$, and $\cF$.
\end{definition}

For some intuition about this transformation, when the queries are monotone $k$-way disjunctions on a database $D \in (\bits^d)^n$, the queries are defined by sets $S \subseteq [d]$ , $|S| \leq k$.  In this case each query can be represented by the $d$-bit indicator vector of the set $S$, with at most $k$ non-zero entries.  Thus we can take $m = d$ and $Y_{\cQ} = \set{ y \in \bits^d \mid \sum_{j=1}^{d} y_j \leq k}$.
\subsection{Polynomial Approximations} \label{sec:polys}

\Jnote{I changed the definition of norm to be $L_{\infty}$, it seemed more natural, and more importantly makes it easier to state an $(\eps,\delta)$ version}
An $m$-variate real polynomial $p \in \R[y_1, \dots, y_m]$ of \emph{degree $t$} and \emph{($L_{\infty}$) norm $T$} can be written as
$
p(y) = \sum_{j_1, \dots, j_m \geq 0 \atop j_1 + \dots + j_m \leq t} c_{j_1, \dots, j_m} \prod_{\ell=1}^{m} y_{\ell}^{j_\ell}
$
where
$
|c_{j_1, \dots, j_m}| \leq T
$
for every $j_1, \dots, j_m$.
Recall that there are at most $\binom{m+t}{t}$ coefficients in an $m$-variate polynomial of total degree $t$. 
Often we will want to associate a polynomial $p$ of degree $t$ and norm $T$ with its coefficient vector $\vec{p} \in [-T,T]^{\binom{m+t}{t}}$.  Specifically,
$\vec{p} = (c_{j_1, \dots, j_m})_{j_1, \dots, j_m \geq 0 \atop j_1 + \dots + j_m \leq t}.$
Given a vector $\vec{p}$ and a point $y \in \bits^{m}$ we use $\vec{p}(y)$ to indicate the evaluation of the polynomial described by the vector $\vec{p}$ at the point $y$.  Observe this is equivalent to computing $\vec{p} \cdot \vec{y}$ where $\vec{y} \in \bits^{\binom{m+t}{t}}$ is defined as $y_{j_1, \dots, j_m} = \prod_{\ell=1}^{m} y_{\ell}^{j_{\ell}}$ for every $j_1, \dots, j_m \geq 0$, $j_1 + \dots + j_m \leq t$.

Let $\cP_{t,T}$ be the family of all $m$-variate real polynomials of degree $t$ and norm $T$.  In many cases, the functions $f_{\cQ, x}: \bits^m \to \bits$ can be approximated well on all the indices in $Y_{\cQ}$ by a family of polynomials $\cP_{t,T}$ with low degree and small norm.  Formally:
\begin{definition}[Uniform Approximation by Polynomials] \label{def:approxbypolys}
Given a family of $m$-variate functions $\cF = \set{f_x}_{x \in \cX}$ and a set $Y \subseteq \bits^m$, we say that the family $\cP_{t,T}$ \emph{uniformly $\gamma$-approximates $\cF$ on $Y$} if for every $x \in \cX$, there exists $p_{x} \in \cP_{t,T}$ such that $\max_{y \in Y} | f_x(y) - p_{x}(y) | \leq \gamma$.

We say that $\cP_{t,T}$ \emph{efficiently and uniformly $\gamma$-approximates $\cF$} if there is an algorithm $\cP_{\cF}$ that takes $x \in \cX$ as input, runs in time $\poly(\log|\cX|, \binom{m+t}{t}, \log T)$, and outputs a coefficient vector $\vec{p}_{x}$ such that $\max_{y \in Y} |f_x(y) - \vec{p}_{x}(y)| \leq \gamma$.
\end{definition}
\section{From Polynomial Approximations to Data Release Algorithms} \label{sec:release}
In this section we present an algorithm for privately releasing any family of counting queries $\cQ$ such that $\cF_{\cQ}$ that can be efficiently and uniformly approximated by polynomials.  The algorithm will take an $n$-row database $D$ and, for each row $x \in D$, constructs a polynomial $p_{x}$ that uniformly approximates the function $f_{\cQ, x}$ (recall that $f_{\cQ, x}(q) = q(x)$, for each $q \in \cQ$).  From these, it constructs a polynomial $p_{D} = \frac{1}{n} \sum_{x \in D} p_{x}$ that uniformly approximates $f_{\cQ, D}$.  The final step is to perturb each of the coefficients of $p_{D}$ using noise from a Laplace distribution (Theorem~\ref{thm:laplacemech}) and bound the error introduced from the perturbation.

\begin{theorem}[Releasing Polynomials] \label{thm:releasepolys}
Let $\cQ = \set{q_y}_{y \in Y_{\cQ} \subseteq \bits^m}$ be a set of counting queries over $\bits^d$, and $\cF_{\cQ}$ be the $\cQ$ function family (Definition~\ref{def:Qfunction}).  Assume that $\cP_{t,T}$ efficiently and uniformly $\gamma$-approximates $\cF_{\cQ}$ on $Y_{\cQ}$ (Definition~\ref{def:approxbypolys}).  Then there is a sanitizer $\san\from (\bits^d)^n \to \R^{\binom{m+t}{t}}$ that
%\ifnum\icalp=0
\begin{enumerate}
\item is $\eps$-differentially private,
\item runs in time $\poly(n, d, \binom{m+t}{t}, \log T, \log(1/\eps))$, and
\item is $(\alpha, \beta)$-accurate for $\cQ$ for
$
\alpha = \gamma + \frac{4 T \binom{m+t}{t}^2 \log\left(\binom{m+t}{t} / \beta\right)}{\eps n}.
$
\end{enumerate}
%\else
%(1) is $\eps$-differentially private,
%(2) runs in time $\poly(n, d, \binom{m+t}{t}, \log T, \log(1/\eps))$, and
%(3) is $(\alpha, \beta)$-accurate for $\cQ$ for
%$
%\alpha = \gamma + \frac{4 T \binom{m+t}{t}^2 \log\left(\binom{m+t}{t} / \beta\right)}{\eps n}.
%$
%\fi
\end{theorem}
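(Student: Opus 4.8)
The plan is to instantiate the outline given just before the statement: construct the approximating polynomial $p_D$ directly from the database, perturb its coefficient vector with Laplace noise, publish the noisy vector as the summary, and let the evaluator $\cE$ simply evaluate the released polynomial at the index $y$ of a query $q_y$. Concretely, on input $D = (x^{(1)},\dots,x^{(n)})$ the sanitizer $\san$ runs the efficient approximation algorithm $\cP_\cF$ guaranteed by the hypothesis on each row, obtaining coefficient vectors $\vec p_{x^{(i)}} \in [-T,T]^{k}$ with $k = \binom{m+t}{t}$ and $\max_{y \in Y_\cQ} |f_{\cQ,x^{(i)}}(y) - \vec p_{x^{(i)}}(y)| \le \gamma$ (here we read Definition~\ref{def:approxbypolys} as saying $\cP_\cF$ outputs a member of $\cP_{t,T}$, so its coordinates are bounded by $T$); it then sets $\vec p_D = \frac1n \sum_{i} \vec p_{x^{(i)}}$ and outputs $Z = \vec p_D + \Lap^{k}(\Delta k/\eps)$ with $\Delta = 2T/n$. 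On a query $q_y$ the evaluator returns $\cE(Z, q_y) = Z \cdot \vec y$, the value at $y$ of the polynomial whose coefficient vector is $Z$.

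For privacy, I would observe that $g \from D \mapsto \vec p_D$ has $L_\infty$-sensitivity at most $\Delta = 2T/n$: replacing one row of $D$ changes exactly one of the $n$ summands $\vec p_{x^{(i)}}$, and each summand lies in $[-T,T]^{k}$, so every coordinate of the average shifts by at most $2T/n$. Lemma~\ref{thm:laplacemech}, applied with this $g$, this $k$, and this $\Delta$, then shows that $Z$ is $\eps$-differentially private, and since $\cE$ merely post-processes $Z$ the whole sanitizer is $\eps$-differentially private, giving item~1. Item~2 (running time) is bookkeeping: each $\vec p_{x^{(i)}}$ costs $\poly(\log|\cX|, k, \log T) = \poly(d, k, \log T)$ since $|\cX| = 2^d$; forming $\vec p_D$ and sampling the $k$ Laplace coordinates to sufficient precision costs $\poly(n, k, \log T, \log(1/\eps))$; and the evaluator enumerates the $k$ monomial exponent vectors to form $\vec y$ and takes one dot product, in $\poly(k)$ time (note $m, t \le k$).

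For accuracy (item~3), fix $q_y \in \cQ$, so $y \in Y_\cQ$ and $q_y(D) = f_{\cQ,D}(y) = \frac1n\sum_i f_{\cQ,x^{(i)}}(y)$. By the triangle inequality and the per-row guarantee, $|f_{\cQ,D}(y) - \vec p_D(y)| \le \frac1n \sum_i |f_{\cQ,x^{(i)}}(y) - \vec p_{x^{(i)}}(y)| \le \gamma$. For the noise term, the key observation is that since $y \in \bits^m$ every entry $y_{j_1,\dots,j_m} = \prod_\ell y_\ell^{j_\ell}$ of $\vec y$ lies in $\{0,1\}$, so $|\vec p_D(y) - Z(y)| = |(\vec p_D - Z) \cdot \vec y| \le \|\vec p_D - Z\|_1$. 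By the accuracy clause of Lemma~\ref{thm:laplacemech}, with probability at least $1-\beta$ this $L_1$ norm — and hence, simultaneously for every query, the quantity above — is at most $2\Delta k^2 \log(k/\beta)/\eps = 4T\binom{m+t}{t}^2 \log(\binom{m+t}{t}/\beta)/(\eps n)$. Adding the two estimates gives $|q_y(Z) - q_y(D)| \le \gamma + 4T\binom{m+t}{t}^2 \log(\binom{m+t}{t}/\beta)/(\eps n) = \alpha$ for all $q_y \in \cQ$, as required.

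There is no genuine obstacle; the argument is a careful composition of the hypotheses, and the two points that warrant attention are both minor. The first is the sensitivity bound, which depends on the coefficients produced by $\cP_\cF$ being bounded by $T$ in magnitude — this is how we read Definition~\ref{def:approxbypolys}, and in any case the relevant guarantee is that \emph{some} $p_x \in \cP_{t,T}$ $\gamma$-approximates $f_{\cQ,x}$. The second is the conversion of $L_1$ error on coefficients into $L_\infty$ (worst-case) error on answers, which is exactly the step that exploits the query structure: because the query indices $y$ are Boolean, the evaluation functional $\vec p \mapsto \vec p(y)$ has a $0/1$ coefficient vector $\vec y$, so it is $1$-Lipschitz from $(\R^{k}, \|\cdot\|_1)$ to $\R$, and the additive error from the Laplace perturbation transfers uniformly across all of $Y_\cQ$.
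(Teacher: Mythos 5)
Your proposal is correct and follows essentially the same path as the paper's proof: build $\vec p_{x^{(i)}}$ per row via $\cP_\cF$, average to $\vec p_D$, bound $L_\infty$-sensitivity by $2T/n$, add Laplace noise per Lemma~\ref{thm:laplacemech}, and convert $L_1$ coefficient error to uniform query error using the fact that monomials evaluate to $0$ or $1$ on $\bits^m$. The two caveats you flag (reading Definition~\ref{def:approxbypolys} as requiring $\vec p_x \in \cP_{t,T}$, and the $L_1 \to L_\infty$ transfer via Boolean indices) are precisely the points the paper relies on as well.
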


\begin{proof}
First we construct the sanitizer $\cA$. See the relevant codebox below.

\begin{algorithm}
The Sanitizer $\cA$
\begin{algorithmic}
\STATE{\textbf{Input}: A database $D \in (\bits^d)^n$, an explicit family of polynomials $\cP$, and a parameter $\eps > 0$.}
\STATE{\textbf{For} $i = 1, \dots, n$}
\INDSTATE[1]{Using efficient approximation of $\cF$ by $\cP$, compute a polynomial $\vec{p}_{x^{(i)}} = \cP_{\cF}(x^{(i)})$ that $\gamma$-approximates $f_{x^{(i)}}$ on $Y_{\cQ}$.}
\STATE{\textbf{Let} $\vec{p}_{D} = \frac{1}{n} \sum_{i=1}^{n} \vec{p}_{x^{(i)}}$, where the sum denotes standard entry-wise vector addition.}
\STATE{\textbf{Let} $\tilde{\vec{p}}_{D} = \vec{p}_D + Z$, where $Z$ is drawn from an $\binom{m+t}{t}$-variate Laplace distribution with parameter $2T/\eps n$ (Section~\ref{sec:sans}).}
\STATE{\textbf{Output}: $\tilde{p}_{D}$.}
\end{algorithmic}
\label{codebox1}
\end{algorithm}
\paragraph{Privacy.}
We establish that $\cA$ is $\eps$-differentially private.  This follows from the observation that for any two adjacent $D \sim D'$ that differ only on row $i^*$,
\begin{align*}
\| \vec{p}_{D} - \vec{p}_{D'} \|_\infty
&={} \left\| \frac{1}{n} \sum_{i =1}^{n} \vec{p}_{x^{(i)}} - \frac{1}{n} \sum_{i=1}^{n} \vec{p}_{x'^{(i)}} \right\|_\infty
={} \frac{1}{n} \left\| \vec{p}_{x^{(i^*)}} - \vec{p}_{x'^{(i^*)}} \right\|_\infty \leq \frac{2T}{n}.
\end{align*}
The last inequality is from the fact that for every $x$, $\vec{p}_x$ is a vector of $L_\infty$ norm at most $T$.  Part~1 of the Theorem now follows directly from the properties of the Laplace Mechanism (Theorem~\ref{thm:laplacemech}).
Now we construct the evaluator $\cE$.
\begin{algorithm}
The Evaluator $\cE$ for the Sanitizer $\cA$
\begin{algorithmic}
\STATE{\textbf{Input}: A vector $\tilde{\vec{p}} \in \R^{\binom{m+t}{t}}$ and the description of a query $y \in \bits^m$.}
\STATE{\textbf{Output}: $\tilde{\vec{p}}(y)$.  Recall that we view $\tilde{\vec{p}}$ as an $m$-variate polynomial, $p$, and $\tilde{\vec{p}}(y)$ is the evaluation of $p$ on the point $y$.}
\end{algorithmic}
\end{algorithm}

\paragraph{Efficiency.}
Next, we show that $\cA$ runs in time $\poly(n, d,\binom{m+t}{t}, \log T, \log(1/\eps))$.  Recall that we assumed the polynomial construction algorithm $\cP$ runs in time $\poly(d, \binom{m+t}{t}, \log T)$.  The algorithm $\cA$ needs to run $\cP_{\cF}$ on each of the $n$ rows, and then it needs to generate $\binom{m+t}{t}$ samples from a univariate Laplace distribution with magnitude $\poly(T, \binom{m+t}{t}, 1/n, 1/\eps)$, which can also be done in time $\poly(\binom{m+t}{t}, \log T, \log n, \log(1/\eps))$.  We also establish that $\cE$ runs in time $\poly(\binom{m+t}{t}, \log T, \log n, \log(1/\eps))$, observe that $\cE$ needs to expand the input into an appropriate vector of dimension $\binom{m+t}{t}$ and take the inner product with the vector $\tilde{\vec{p}}$, whose entries have magnitude $\poly(\binom{m+t}{t}, T, 1/n, 1/\eps)$.  These observations establish Part 2 of the Theorem.

\medskip

\noindent
\textit{Accuracy.}
Finally, we analyze the accuracy of the sanitizer $\cA$.  First, by the assumption that $\cP_{t, T}$ uniformly $\gamma$-approximates $\cF$ on $Y \subseteq \bits^m$, we have
\ifnum\icalp=0
\begin{align*}
\max_{y \in Y} \left| f_{D}(y) - \vec{p}_{D}(y) \right| 
&={} \max_{y \in Y} \left| \frac{1}{n} \sum_{i=1}^{n} f_{x^{(i)}}(y) - \frac{1}{n} \sum_{i=1}^{n} \vec{p}_{x^{(i)}}(y) \right| \\
&\leq{} \frac{1}{n} \sum_{i=1}^{n} \max_{y \in Y} \left| f_{x^{(i)}}(y) - \vec{p}_{x^{(i)}}(y) \right|
\leq{} \gamma.
\end{align*}
\else
\begin{align*}
\max_{y \in Y} \left| f_{D}(y) - \vec{p}_{D}(y) \right| 
%&={} \max_{y \in Y} \left| \frac{1}{n} \sum_{i=1}^{n} f_{x^{(i)}}(y) - \frac{1}{n} \sum_{i=1}^{n} \vec{p}_{x^{(i)}}(y) \right| \\
&\leq{} \frac{1}{n} \sum_{i=1}^{n} \max_{y \in Y} \left| f_{x^{(i)}}(y) - \vec{p}_{x^{(i)}}(y) \right|
\leq{} \gamma.
\end{align*}
\fi
Now we want to establish that 
$
\prob{\max_{y \in \bits^m} \left| \tilde{\vec{p}}_{D}(y) - \vec{p}_{D}(y) \right| \leq \alpha'} \geq 1-\beta
$
for $$\alpha' = \frac{4T\binom{m+t}{t}^2 \log\left( \binom{m+t}{t} /\beta \right) }{ \eps n},$$ where the probability is taken over the coins of $\cA$.  Part (3) of the Theorem will then follow by the triangle inequality.

To see that the above statement is true, observe that by the properties of the Laplace mechanism (Theorem~\ref{thm:laplacemech}), we have
$
\prob{ \left\| \tilde{\vec{p}}_{D} - \vec{p}_{D} \right\|_1 \leq \alpha'} \geq 1-\beta,
$
where the probability is taken over the coins of $\cA$. Given that $\|\tilde{\vec{p}}_{D} - \vec{p}_{D} \|_1 \leq \alpha'$, it holds that for every $y \in \bits^m$,
\begin{align*}
\left| \tilde{\vec{p}}_{D}(y) - \vec{p}_D(y) \right| 
&={} \left| (\tilde{\vec{p}}_{D} - \vec{p}_D)(y) \right| \leq{} \| \tilde{\vec{p}}_{D} - \vec{p}_D \|_1
\leq{} \alpha'.
\end{align*}
The first inequality follows from the fact that every monomial evaluates to $0$ or $1$ at the point $y$.  This completes the proof of the theorem.

\end{proof}

Using Theorem~\ref{thm:advcomp}, we can improve the bound on the error at the expense of relaxing the privacy guarantee to $(\eps, \delta)$-differential privacy.  This improved error only affects the hidden constants in Theorems~\ref{thm:main1}-\ref{thm:main3}, so we only state those theorems for $\eps$-differential privacy.
\begin{theorem}[Releasing Polynomials, $(\eps, \delta)$-Differential Privacy] \label{thm:advreleasepolys}
Let $\cQ = \set{q_y}_{y \in Y_{\cQ} \subseteq \bits^m}$ be a set of counting queries over $\bits^d$, and $\cF_{\cQ}$ be the $\cQ$ function family (Definition~\ref{def:Qfunction}).  Assume that $\cP_{t,T}$ efficiently and uniformly $\gamma$-approximates $\cF_{\cQ}$ on $Y_{\cQ}$ (Definition~\ref{def:approxbypolys}).  Then there is a sanitizer $\san\from (\bits^d)^n \to \R^{\binom{m+t}{t}}$ that
\begin{enumerate}
\item is $(\eps,\delta)$-differentially private,
\item runs in time $\poly(n, d, \binom{m+t}{t}, \log T, \log(1/\eps), \log (1/\delta))$,
\item is $(\alpha, \beta)$-accurate for $\cQ$ for
$
\alpha = \gamma + \frac{12 T \binom{m+t}{t} \sqrt{\binom{m+t}{t} \log(1/\delta)} \log\left(\binom{m+t}{t} / \beta\right)}{\eps n}.
$
\end{enumerate}
\end{theorem}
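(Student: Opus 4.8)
The plan is to mimic the proof of Theorem~\ref{thm:releasepolys} essentially verbatim, changing only the magnitude of the Laplace noise added to the coefficient vector $\vec{p}_D$. Concretely, I would use the same sanitizer $\cA$ and evaluator $\cE$ as in the codeboxes above, except that in the noise-addition step I replace the draw from an $\binom{m+t}{t}$-variate Laplace distribution of parameter $2T/\eps n$ by a draw of parameter $6T\sqrt{\binom{m+t}{t}\log(1/\delta)}/\eps n$. This is exactly the magnitude prescribed by Lemma~\ref{thm:advcomp} when instantiated with $g = \vec{p}_{(\cdot)}$, dimension $k = \binom{m+t}{t}$, and sensitivity parameter $\Delta = 2T/n$.

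For privacy, I would reuse the sensitivity computation from the proof of Theorem~\ref{thm:releasepolys}: for adjacent $D \sim D'$ differing on row $i^*$, we have $\|\vec{p}_D - \vec{p}_{D'}\|_\infty = \frac1n\|\vec{p}_{x^{(i^*)}} - \vec{p}_{x'^{(i^*)}}\|_\infty \le 2T/n$, since every coefficient vector $\vec{p}_x$ has $L_\infty$-norm at most $T$. Part~1 then follows immediately from the $(\eps,\delta)$-privacy guarantee of Lemma~\ref{thm:advcomp}. For efficiency (Part~2), the argument is identical to the one in Theorem~\ref{thm:releasepolys}, the only change being that sampling each coordinate of the Laplace vector now also depends on $\log(1/\delta)$, so the running time picks up an additional $\poly(\log(1/\delta))$ factor, already reflected in the statement.

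For accuracy (Part~3), I would again split the error via the triangle inequality into (i) the approximation error $\max_{y\in Y_{\cQ}}|f_D(y) - \vec{p}_D(y)| \le \gamma$, which follows exactly as before by averaging the per-row guarantees of the approximating family $\cP_{t,T}$, and (ii) the perturbation error. For (ii), Lemma~\ref{thm:advcomp} gives, with probability at least $1-\beta$, that $\|\tilde{\vec{p}}_D - \vec{p}_D\|_1 \le 6\Delta k\sqrt{k\log(1/\delta)}\log(k/\beta)/\eps$ with $k = \binom{m+t}{t}$ and $\Delta = 2T/n$, i.e.\ at most $12T\binom{m+t}{t}\sqrt{\binom{m+t}{t}\log(1/\delta)}\log(\binom{m+t}{t}/\beta)/\eps n$. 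Finally, since every monomial $\prod_\ell y_\ell^{j_\ell}$ takes value in $\{0,1\}$ at any $y \in \bits^m$, we have $|\tilde{\vec{p}}_D(y) - \vec{p}_D(y)| = |(\tilde{\vec{p}}_D - \vec{p}_D)(y)| \le \|\tilde{\vec{p}}_D - \vec{p}_D\|_1$ for every such $y$, and the two error contributions add up to the claimed $\alpha$.

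There is essentially no substantive obstacle here — the work is entirely bookkeeping of constants when instantiating Lemma~\ref{thm:advcomp} in place of Lemma~\ref{thm:laplacemech}. The one point worth double-checking is that the $L_1$-norm bound on the coefficient-vector error transfers to a pointwise bound on the polynomial evaluations over the index set $Y_{\cQ} \subseteq \bits^m$; this is precisely where it matters that queries are indexed by Boolean strings, so that every monomial evaluates to $0$ or $1$ and no blow-up from the monomial values occurs.
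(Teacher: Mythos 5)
Your proposal is correct and matches the paper's own (terse) proof exactly: the paper simply states that the argument is identical to that of Theorem~\ref{thm:releasepolys} with Lemma~\ref{thm:advcomp} substituted for Lemma~\ref{thm:laplacemech}, and your instantiation with $\Delta = 2T/n$ and $k = \binom{m+t}{t}$ reproduces both the noise magnitude and the accuracy bound precisely.
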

The proof of this theorem is identical to that of Theorem~\ref{thm:releasepolys}, but using the analysis of the Laplace mechanism from Theorem~\ref{thm:advcomp} in place of that of Theorem~\ref{thm:laplacemech}.
\section{Applications} \label{sec:apps}
In this section we establish the existence of explicit families of low-degree polynomials approximating the families $\cF_{\cQ}$ for some interesting query sets.
\subsection{Releasing Monotone Disjunctions} \label{sec:monotonedisj}

We define the class of monotone $k$-way disjunctions as follows:

\begin{definition}[Monotone $k$-Way Disjunctions] \label{def:monotonedisj}Let $\cX = \bits^d$.
The query set $\cQ_{\mathrm{Disj}, k} = \set{q_y}_{y \in Y_k \subseteq \bits^d}$ of \emph{monotone $k$-way disjunctions over $\bits^d$} contains a query $q_y$ for every $y \in Y_k = \set{y \in \bits^d \mid |y| \leq k}$.  Each query is defined as $q_y(x_1, \dots, x_d) = \bigvee_{j =1}^{d} y_j x_j$.
The $\cQ_{\mathrm{Disj}, k}$ function family $\cF_{\cQ_{\mathrm{Disj}, k}} = \set{f_x}_{x \in \bits^d}$ contains a function
$f_x(y_1, \dots, y_d) = \bigvee_{j=1}^{d} y_j x_j$
for every $x \in \bits^d$.
\end{definition}

\ifnum\icalp=0
Thus the family $\cF_{\cQ_{\mathrm{Disj}, k}}$ consists of all disjunctions, and the image of $\cQ_{\mathrm{Disj}, k}$, which we denote $Y_k$, consists of all vectors $y \in \bits^d$ with at most $k$ non-zero entries.  We can approximate disjunctions over the set $Y_k$ using a well-known transformation of the Chebyshev polynomials (see, e.g., \cite[Theorem 8]{KlivansSe04} and \cite[Claim 5.4]{HardtRoSe12}).  First we recall the useful properties of the univariate Chebyshev polynomials.

\begin{fact} [Chebyshev Polynomials] \label{lemma:chebyshev}
For every $k \in \N$ and $\gamma > 0$, there exists a univariate real polynomial $g_k(x) = \sum_{i=0}^{t_k} c_i x^i$ of degree $t_k$ such that 
\ifnum\icalp=0
\begin{enumerate}
\item $t_k = O(\sqrt{k} \log(1/\gamma))$,
\item for every $i \in \set{0,1,\dots,t_k}, |c_i| \leq  2^{O(\sqrt{k} \log(1/\gamma))}$,
\item $g_k(0) = 0$, and
\item for every $x \in \set{1, \dots, k}$, $1-\gamma \leq g_k(x) \leq 1+\gamma$.
\end{enumerate}
\else
(1) $t_k = O(\sqrt{k} \log(1/\gamma))$, (2) for every $i \in \set{0,1,\dots,t_k}, |c_i| \leq  2^{O(\sqrt{k} \log(1/\gamma))}$, (3) $g_k(0) = 0$, and (4) for every $x \in \set{1, \dots, k}$, $1-\gamma \leq g_k(x) \leq 1+\gamma$.
\fi
Moreover, such a polynomial can be constructed in time $\poly(k, \log(1/\gamma))$ (e.g.~using linear programming, though more efficient algorithms are known).
\end{fact}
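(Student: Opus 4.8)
The plan is to build $g_k$ from the univariate Chebyshev polynomials $T_s$, using the two standard facts that $|T_s(z)| \le 1$ for all $z \in [-1,1]$ and that $T_s$ grows like $\exp(\Omega(s\sqrt{\mu}))$ at the point $1+\mu$ just outside that interval. First I would affinely rescale the domain so that $\set{1,\dots,k}$ lands inside $[-1,1]$ while $0$ lands just outside it: let $u(x) = (2x-(k+1))/(k-1)$, so that $u(1)=-1$, $u(k)=1$, hence $u(\set{1,\dots,k}) \subseteq [-1,1]$, while $u(0) = -(1+\eps)$ with $\eps := 2/(k-1) \ge 2/k$. (For $k=1$ the map is undefined and one simply takes $g_1(x)=x$.) Then I set $h(x) := T_s(u(x))$ for a degree $s$ chosen below, and define $g_k(x) := 1 - h(x)/h(0)$, which has degree $\deg h = s$.

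Property~(3) is then immediate: $g_k(0) = 1 - h(0)/h(0) = 0$ (note $h(0) = (-1)^s T_s(1+\eps) \neq 0$). For property~(4), note that $x \in \set{1,\dots,k}$ implies $u(x) \in [-1,1]$, so $|h(x)| \le 1$ and hence $|g_k(x)-1| = |h(x)|/|h(0)| \le 1/|h(0)|$; it therefore suffices to choose $s$ so that $|h(0)| \ge 1/\gamma$. This is where the Chebyshev growth estimate enters: writing $T_s(1+\eps) = \cosh\!\big(s\cdot\mathrm{arccosh}(1+\eps)\big) \ge \tfrac12\exp\!\big(s\cdot\mathrm{arccosh}(1+\eps)\big)$ and using the elementary bound $\mathrm{arccosh}(1+\eps) \ge \tfrac12\sqrt{2\eps} \ge 1/\sqrt{k}$, one gets $|h(0)| = T_s(1+\eps) \ge \tfrac12\exp(s/\sqrt{k})$, so that $s = \lceil \sqrt{k}\ln(2/\gamma)\rceil = O(\sqrt{k}\log(1/\gamma))$ suffices; this gives~(1). (We may also assume $|h(0)| \ge 1$, since $T_s(1+\eps) \ge T_s(1) = 1$.)

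For the coefficient bound~(2) I would expand $h(x) = \sum_{j=0}^{s} c_j^{\mathrm{Cheb}}\, u(x)^j$, use the standard bound $|c_j^{\mathrm{Cheb}}| \le 2^{O(s)}$ on the coefficients of $T_s$, and bound the coefficients of each power $u(x)^j$ by the binomial theorem: the slope and intercept of $u$ are $O(1)$ in absolute value (for $k \ge 2$), so $u(x)^j$ has coefficients of magnitude $2^{O(j)}$. Summing over $j \le s$ gives coefficients of magnitude $2^{O(s)}$ for $h$, and dividing by $|h(0)| \ge 1$ and adding the constant term $1$ keeps the bound at $2^{O(s)} = 2^{O(\sqrt{k}\log(1/\gamma))}$, which is~(2). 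Finally, the construction runs in time $\poly(k,\log(1/\gamma))$: generate $T_s$ from the recurrence $T_0 = 1$, $T_1 = x$, $T_{s+1} = 2xT_s - T_{s-1}$ (whose coefficients are integers of bit-length $O(s)$), compose with $u$, and normalize by $h(0)$; this is the ``more efficient algorithm'' referred to in the statement and avoids solving a linear program.

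The only step requiring genuine care is the Chebyshev growth estimate $T_s(1+\eps) \ge \exp(\Omega(s\sqrt{\eps}))$ and tracking the constants there so that the resulting degree comes out to $O(\sqrt{k}\log(1/\gamma))$; once that estimate is in hand, properties~(1), (3), (4) are essentially immediate, and the coefficient accounting for~(2) is routine given the observation that $|h(0)| \ge 1$.
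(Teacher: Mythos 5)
The paper states this Fact without proof, citing \cite[Theorem 8]{KlivansSe04} and \cite[Claim 5.4]{HardtRoSe12} for the standard Chebyshev transformation; your argument is a correct rendering of exactly that construction (affine rescaling pushing $0$ just outside $[-1,1]$, applying $T_s$, normalizing so the value at $0$ vanishes, and invoking the extremal growth of $T_s$ outside $[-1,1]$ for the degree bound). One small point worth noting: the intermediate inequality $\mathrm{arccosh}(1+\eps)\ge\tfrac12\sqrt{2\eps}$ fails once $\eps$ exceeds a moderate constant, but since $\eps=2/(k-1)\le 2$ for all $k\ge 2$ it holds throughout your range (and the weaker consequence $\mathrm{arccosh}(1+\eps)\ge 1/\sqrt{k}$, which is all you use, is easily verified directly for $k=2,3,4$). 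Your explicit construction via the three-term recurrence, rather than the linear program mentioned in the paper's parenthetical, is indeed the ``more efficient algorithm'' alluded to there, and the rational-arithmetic accounting is sound since the only denominators introduced are powers of $k-1$ and the single value $h(0)$, all of $\poly(k,\log(1/\gamma))$ bit-length.
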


We can use Lemma \ref{lemma:chebyshev} to approximate $k$-way monotone disjunctions. Note that our result easily extends to monotone $k$-way conjunctions via the identity \\ $\wedge_{j=1}^{d} x_j y_j = 1 - \vee_{j=1}^{d} (1-x_j) y_j$. 
Moreover, it extends to non-monotone conjunctions and disjunctions: we may extend the data universe
as in \cite[Theorem 1.2]{HardtRoSe12} to $\{0, 1\}^{2d}$, and include the negation of each item in the original domain. Non-monotone conjunctions over domain $\{0, 1\}^d$ correspond to monotone conjunctions over the expanded domain $\{0,1\}^{2d}$.

%\begin{remark} It is easy to see that accuracy width-$k$ monotone disjunctions is equivalent to accuracy for width-$k$ monotone conjunctions using the simple transformation $\wedge_{j=1}^{d} x_j y_j = 1 - \vee_{j=1}^{d} (1-x_j) y_j$.  It is also true that to obtain accuracy for width-$k$ general conjunctions over a database $D \in (\bits^{d})^n$ it is sufficient to release width-$k$ monotone disjunctions over a database $D' \in (\bits^{2d})^n$ where each row $x'^{(i)} \in D'$ contains the row $x^{(i)} \in D$ concatenated with its negation.
%\end{remark}
The next lemma shows that $\cF_{\cQ_{\mathrm{Disj}, k}}$ can be efficiently and uniformly approximated by polynomials of low degree and low norm.  The statement is a well-known application of Chebyshev polynomials, and a similar statement appears in~\cite{HardtRoSe12} but without bounding the running time of the construction or a bound on the norm of the polynomials.  We include the statement and a proof for completeness, and to verify the additional properties we need. 

\begin{lemma}[Approximating $\cF_{\cQ_{\mathrm{Disj}, k}}$ by polynomials, similar to~\cite{HardtRoSe12}] \label{thm:approxdisj}
For every $k,d \in \N$ such that $k \leq d$ and every $\gamma > 0$, the family $\cP_{t, T}$ of $d$-variate real polynomials of degree $t = O(\sqrt{k} \log(1/\gamma))$ and norm $T = d^{O(\sqrt{k} \log(1/\gamma))}$ efficiently and uniformly $\gamma$-approximates the family $\cF_{\cQ_{\mathrm{Disj}, k}}$ on the set $Y_k$.
\end{lemma}
\begin{proof}
The algorithm $\cP_{\mathrm{Disj}, k}$ for constructing the polynomials appears in the relevant codebox above.
\begin{algorithm}
$\cP_{\mathrm{Disj},k}$
\begin{algorithmic}
\STATE{\textbf{Input:} a vector $x \in \bits^d$.}
\STATE{\textbf{Let:} $g_k$ be the polynomial described in Lemma~\ref{lemma:chebyshev}.}
\STATE{\textbf{Let:} $\vec{p}_x \in \R^{\binom{m+t_k}{t_k}}$ be the expansion of $p_x(y_1, \dots, y_d) = g_k\left(\sum_{j=1}^{d} y_j x_j\right).$}
\STATE{\textbf{Output:} $\vec{p}_x$.}
\end{algorithmic}
\end{algorithm}

Since $p_{x}$ is a degree-$t_k$ polynomial applied to a degree-1 polynomial (in the variables $y_j$), its degree is at most $t_k$.  To see the stated norm bound, note that every monomial of total degree $i$ in $p_{x}$ comes from the expansion of $\left(\sum_{j=1}^{d} y_jx_j\right)^{i}$, and every coefficient in this expansion is a non-negative integer less than or equal to $k^i$. In $p_{x}$, each of these terms is multiplied by $c_i$ (the $i$-th coefficient of $g_k$).  Thus the norm of $p_{x}$ is at most $ \max_{i \in \set{0,1,\dots,t_k}} k^i \cdot |c_i| = k^{O(\sqrt{k} \log(1/\gamma))}= d^{O(\sqrt{k} \log(1/\gamma))}  $.  To see that $\cP_{\mathrm{Disj}, k}$ is efficient, note that we can find every coefficient of $p_{x}$ of total degree $i$ by expanding $\left(\sum_{j=1}^{d} y_jx_j\right)^{i}$ into all of its $d^i$ terms and multiplying by $c_i$, which can be done in time $\poly(d^{t_k}) = \poly(\binom{d+t_k}{t_k})$, as is required.

To see that $\cP_{\mathrm{Disj}, k}$ $\gamma$-approximates $\cF_{\cQ_{\mathrm{Disj}, k}}$, observe that for every $x, y \in \bits^d$,
$
f_x(y) = 0 \Rightarrow p_x(y) =  0,
$
and for every $x \in \bits^d, y \in Y_k$,
$
f_x(y) = 1 \Rightarrow 1 - \gamma \leq p_x(y) \leq 1 + \gamma.
$
This completes the proof.
\end{proof}
\else
Thus the family $\cF_{\cQ_{\mathrm{Disj}, k}}$ consists of all disjunctions, and the index set, $Y_k$, consists of all vectors $y \in \bits^d$ with at most $k$ non-zero entries.  

The next lemma shows that $\cF_{\cQ_{\mathrm{Disj}, k}}$ can be efficiently and uniformly approximated by polynomials of low degree and low norm.  The statement is a well-known application of Chebyshev polynomials, and a similar statement appears in~\cite{HardtRoSe12} but without bounding the running time of the construction or a bound on the norm of the polynomials. 
\begin{lemma}[Approximating $\cF_{\cQ_{\mathrm{Disj}, k}}$ by polynomials, similar to~\cite{HardtRoSe12}] \label{thm:approxdisj}
For every $k,d \in \N$ such that $k \leq d$ and every $\gamma > 0$, the family $\cP_{t, T}$ of $d$-variate real polynomials of degree $t = O(\sqrt{k} \log(1/\gamma))$ and norm $T = d^{O(\sqrt{k} \log(1/\gamma))}$ efficiently and uniformly $\gamma$-approximates the family $\cF_{\cQ_{\mathrm{Disj}, k}}$ on the set $Y_k$.
\end{lemma}
\fi

Theorem~\ref{thm:main1} in the introduction follows by combining Theorems~\ref{thm:releasepolys} and~\ref{thm:approxdisj}.
%\begin{theorem} \label{thm:disj}
%There is a sanitizer $\cA$ that takes a database $D \in (\bits^d)^n$, runs in time $d^{O\left(\sqrt{k} \log\left(1/\alpha\right)\right)}$, and is $(\alpha, \beta)$-accurate for the class $\cQ_{\mathrm{Disj}, k}$ of $k$-way monotone disjunctions as long as
%$$n \geq \frac{d^{O\left(\sqrt{k} \log\left(1/\alpha\right)\right)} \log(1/\beta)}{ \eps}.$$
%\end{theorem}

\subsection{Releasing Monotone $r$-of-$k$ Queries}
We define the class of monotone $r$-of-$k$ queries as follows:

\begin{definition}[Monotone $r$-of-$k$ Queries] Let $\cX = \bits^d$ and $r,k \in \N$ such that $r \leq k \leq d$.
The query set $\cQ_{r, k} = \set{q_y}_{y \in Y_k \subseteq \bits^d}$ of \emph{monotone $r$-of-$k$ queries over $\bits^d$} contains a query $q_y$ for every $y \in Y_k = \set{y \in \bits^d \mid |y| \leq k}$.  Each query is defined as $q_y(x_1, \dots, x_d) = \mathbf{1}_{\sum_{j=1}^{d} y_j x_j \geq r}$.
The $\cQ_{r, k}$ function family $\cF_{\cQ_{r, k}} = \set{f_x}_{x \in \bits^d}$ contains a function
$f_x(y_1, \dots, y_d) = \mathbf{1}_{\sum_{j=1}^{d} y_j x_j \geq r}$
for every $x \in \bits^d$.
\end{definition}

\ifnum\icalp=0
Sherstov \cite[Lemma 3.11]{Sherstov09} %(extending results of Paturi~\cite{Paturi92}) 
gives an explicit construction of polynomials that can be used to approximate the family $\cF_{\cQ_{r,k}}$ over $Y_k$ with low degree.
It can be verified by inspecting the construction that the coefficients of the resulting polynomial are not too large.
%We remark that Paturi \cite{Paturi92} had previously proved the existence of degree $O\left(\sqrt{rk}\right)$ polynomials which $\frac{1}{3}$-approximate the family $\cF_{\cQ_{r,k}}$ over $Y_k$. However, his result does not obviously extend to sub-constant error, and his construction applies Jackson's classical theorems on uniform approximation \cite{Jackson30} in a black-box manner, making it difficult to bound the absolute value of the coefficients of the polynomials.  In our setting, we require sub-constant error in the approximation, as well as a bound on the size of the coefficients.

\begin{lemma}[\cite{Sherstov09}] \label{lemma:paturi}
For every $r,k \in \N$ such that $r \leq k$ and $\gamma > 0$, there exists a univariate polynomial $g_{r,k}\from \R \to \R$ of degree $t_{r,k}$ such that $g_{r,k}(x) = \sum_{i=0}^{t_k} c_i x^i$ and
\ifnum\icalp=0
\begin{enumerate}
\item $t_{r,k} = O\left(\sqrt{rk} \log(k) + \sqrt{k \log(1/\gamma)\log(k)}\right)$,
\item for every $i \in \set{0,1,\dots, t_k}, |c_i| \leq  2^{\tilde{O}(\sqrt{kr \log(1/\gamma)})}$,
\item for every $x \in \set{0,1,\dots,r-1}$, $-\gamma \leq g_{r,k}(x) \leq \gamma$, and
\item for every $x \in \set{r, \dots, k}$, $1-\gamma \leq g_{r,k}(x) \leq 1+\gamma$.
\end{enumerate}
\else
(1) $t_{r,k} = O\left(\sqrt{rk} \log(k) + \sqrt{k \log(1/\gamma)\log(k)}\right)$,
(2) for every $i \in \set{0,1,\dots, t_k}, |c_i| \leq  2^{\tilde{O}(\sqrt{kr \log(1/\gamma)})}$,
(3) for every $x \in \set{0,1,\dots,r-1}$, $-\gamma \leq g_{r,k}(x) \leq \gamma$, and
(4) for every $x \in \set{r, \dots, k}$, $1-\gamma \leq g_{r,k}(x) \leq 1+\gamma$.
\fi
Moreover, $g_{r,k}$ can be constructed in time $\poly(k, r, \log(1/\gamma))$ (e.g.~using linear programming).
\end{lemma}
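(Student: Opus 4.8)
The plan is to read the statement as a claim about the $\gamma$-approximate degree of the threshold function $\mathrm{TH}_r^k\from\{0,1,\dots,k\}\to\{0,1\}$ given by $\mathrm{TH}_r^k(w)=\mathbf{1}_{w\ge r}$: properties~(3) and~(4) together say exactly that $|g_{r,k}(w)-\mathrm{TH}_r^k(w)|\le\gamma$ for every integer $w$ with $0\le w\le k$, while~(1) and~(2) control its degree and coefficient size. First I would reduce to the case $r\le(k+1)/2$ using the reflection identity $\mathrm{TH}_r^k(w)=1-\mathrm{TH}_{k-r+1}^k(k-w)$: given an approximator $g'$ for $\mathrm{TH}_{k-r+1}^k$, the polynomial $g_{r,k}(w):=1-g'(k-w)$ approximates $\mathrm{TH}_r^k$, has the same degree, has the roles of~(3) and~(4) correctly swapped, and has coefficient magnitudes larger by at most a factor $(k+1)^{t}\le 2^{O(t\log k)}$, which~(2) absorbs. (We may also assume $\gamma<1$, else the lemma is trivial.) So it suffices to build, for $r\le(k+1)/2$, a polynomial of degree $t_{r,k}$ and of the claimed coefficient size with the stated behavior at the integers $0,\dots,k$.

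For the degree bound~(1) I would construct $g_{r,k}$ from Chebyshev polynomials, following the Paturi-style upper bound on the approximate degree of symmetric functions. Two features are exploited: the quadratic Chebyshev speedup (a degree-$m$ Chebyshev polynomial stays in $[-1,1]$ on $[-1,1]$ but grows like $e^{\Theta(m\sqrt\delta)}$ at $1+\delta$), and the fact that the approximator only has to be controlled at the well-separated points $0,1,\dots,k$ rather than on a whole interval. Concretely I would first obtain a constant-error approximator of degree $O(\sqrt{rk}\log k)$ by composing a coarse Chebyshev separator (rescaled so $\{0,\dots,r-1\}$ lands where the Chebyshev polynomial is bounded and $\{r,\dots,k\}$ where it is large) with an interval-flattening polynomial that pushes the image of $\{r,\dots,k\}$ to within $\pm\gamma$ of $1$ and the image of $\{0,\dots,r-1\}$ to within $\pm\gamma$ of $0$; the ``amplified-OR'' refinement of the Chebyshev construction --- which gets a $\sqrt{\log(1/\gamma)}$ rather than $\log(1/\gamma)$ dependence precisely because it only needs smallness at the discrete points --- contributes the additional $O(\sqrt{k\log(1/\gamma)\log k})$ in the degree. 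This is the step I expect to be the main obstacle: obtaining the scaling $\sqrt{rk}$ (rather than $r\sqrt k$) and $\sqrt{\log(1/\gamma)}$ (rather than $\log(1/\gamma)$) is exactly the content of Paturi's theorem, and making the construction simultaneously explicit and robust is where the $\log k$ factors are lost.

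Property~(2), the coefficient bound --- the one thing beyond what Sherstov states --- I would get for free whenever $t_{r,k}\le k$ (the regime of interest, since $\gamma$ is never astronomically small in the applications) by Lagrange interpolation: reconstruct $g_{r,k}$ from its values at $0,1,\dots,t_{r,k}$, all lying in $[-\gamma,1+\gamma]\subseteq[-2,2]$; each Lagrange basis polynomial over $\{0,\dots,t_{r,k}\}$ has coefficients of magnitude at most $(2t_{r,k})^{t_{r,k}}$ (monic numerator with roots in $\{0,\dots,t_{r,k}\}$, denominator of magnitude at least $1$), so $|c_i|\le (t_{r,k}+1)\cdot 2\cdot(2t_{r,k})^{t_{r,k}}=2^{O(t_{r,k}\log t_{r,k})}=2^{\tilde{O}(\sqrt{rk\log(1/\gamma)})}$ (using $\sqrt{rk},\sqrt{k\log(1/\gamma)}\le\sqrt{rk\log(1/\gamma)}$). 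In the complementary case one tracks magnitudes through the Chebyshev construction instead: $T_m$ has integer coefficients bounded by $2^m$, each linear rescaling by rationals of height $\mathrm{poly}(k)$ multiplies magnitudes by $2^{O(m\log k)}$, and composing $O(1)$ such polynomials leaves the bound at $2^{\tilde{O}(t_{r,k}\log k)}$, again $2^{\tilde{O}(\sqrt{rk\log(1/\gamma)})}$. Finally, for the construction time: the explicit Chebyshev construction is computable in $\mathrm{poly}(k,r,\log(1/\gamma))$ time (Chebyshev polynomials via their recurrence; $O(1)$ rescalings, products and compositions of degree-$\mathrm{poly}(k)$ polynomials; and by~(2) the output list of $t_{r,k}+1$ coefficients has total bit-length $\mathrm{poly}(k,r,\log(1/\gamma))$), and alternatively one can simply solve the linear program in the variables $c_0,\dots,c_{t_{r,k}}$ with the $2(k+1)$ constraints coming from~(3) and~(4) at the integers $0,\dots,k$ together with the box constraints from~(2): it is feasible by the explicit construction, has $\mathrm{poly}(k,r,\log(1/\gamma))$ bit-size, and hence a vertex solution --- which satisfies all of~(1)--(4) and has polynomial bit-complexity --- is found in polynomial time.
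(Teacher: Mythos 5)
Your route to the coefficient bound (Property~2) is genuinely different from, and cleaner than, the paper's. The paper spells out Sherstov's construction of $g_{r,k}$ explicitly --- a sum over $\ell<r$ of approximate $\mathrm{EXACT}_\ell$ indicators, each built from two nested Chebyshev polynomials and a correcting product --- and confirms coefficient magnitudes through all eight stages. You observe that once the degree bound and the pointwise estimates (3)--(4) are granted (both cited to Sherstov, in your write-up and in the paper's), the coefficient bound is automatic: Lagrange interpolation at the nodes $\{0,\dots,t_{r,k}\}$ (valid when $t_{r,k}\le k$, which is the only nontrivial regime since otherwise the degree-$k$ exact interpolant already works and the same Lagrange bound applies to it) reconstructs $g_{r,k}$ from values lying in $[-\gamma,1+\gamma]$, and each Lagrange basis polynomial over those nodes has coefficients of magnitude at most $(2t_{r,k})^{t_{r,k}}$, giving $|c_i|\le 2^{O(t_{r,k}\log t_{r,k})}=2^{\tilde O(\sqrt{rk\log(1/\gamma)})}$. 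This decouples the coefficient bound from the internals of the construction and makes the paper's appendix-level bookkeeping unnecessary; the constructibility claim then follows, as in the paper, from feasibility of the resulting polynomial-size LP. One caveat on your heuristic account of the degree bound: the two-stage picture you sketch (a single coarse Chebyshev separator scaling $\{0,\dots,r-1\}$ into the bounded region of $T_m$, followed by flattening $\{r,\dots,k\}$) is not Sherstov's construction and does not obviously give the $\sqrt{rk}$ scaling --- with that rescaling the points $r,\dots,k$ land at distances $\Theta(1/r),\dots,\Theta(k/r)$ beyond the Chebyshev interval, so the separator is doubly-exponentially non-uniform across them and no low-degree flattening step can equalize that range; Sherstov instead writes $\mathbf{1}_{x\ge r}=1-\sum_{\ell<r}\mathrm{EXACT}_\ell(x)$ and approximates each summand separately to error $\gamma/k$. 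Since you lean on the citation for the degree bound, exactly as the paper does, this is a flaw in the intuition you offer rather than in the argument.
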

For completeness we include a proof of Lemma~\ref{lemma:paturi} in the appendix.
We can use these polynomials to approximate monotone $r$-of-$k$ queries.

\begin{lemma}[Approximating $\cF_{\cQ_{r, k}}$ on $Y_k$] \label{thm:approx2}
For every $r,k,d \in \N$ such that $r \leq k \leq d$ and every $\gamma > 0$, the family $\cP_{t,T}$ of $d$-variate real polynomials of degree $t = \tilde{O}(\sqrt{kr \log(1/\gamma)})$ and norm $T = d^{\tilde{O}(\sqrt{kr \log(1/\gamma)})}$ efficiently and uniformly $\gamma$-approximates the family $\cF_{\cQ_{r, k}}$ on the set $Y_k$.
\end{lemma}
\begin{proof}
The construction and proof is identical to that of Theorem~\ref{thm:approxdisj} with the polynomials of Lemma~\ref{lemma:paturi} in place of the polynomials described in Lemma \ref{lemma:chebyshev}.
\end{proof}

Theorem~\ref{thm:main2} in the introduction now follows by combining Theorems~\ref{thm:releasepolys} and~\ref{thm:approx2}. Note that our result also extends easily
to non-monotone $r$-of-$k$ queries in the same manner as Theorem \ref{thm:main1}.  
%\begin{theorem} \label{thm:rofk}
%There is a sanitizer $\cA$ that takes a database $D \in (\bits^d)^n$, runs in time $d^{O(\sqrt{kr} \log(1/\alpha))}$, and is $(\alpha, \beta)$-accurate for the class $\cQ_{r, k}$ of monotone $r$-of-$k$ queries as long as
%$$n \geq \frac{d^{O(\sqrt{kr} \log(1/\alpha))} \log(1/\beta)}{ \eps}.$$
%\end{theorem}
\else
The next lemma shows that $\cF_{\cQ_{r,k}}$ can be efficiently and uniformly approximated over $Y_k$ by low-degree polynomials.  The statement is based on approximation-theoretic results of Sherstov \cite[Lemma 3.11]{Sherstov09}.
\begin{lemma}[Approximating $\cF_{\cQ_{r, k}}$ on $Y_k$] \label{thm:approx2}
For every $r,k,d \in \N$ such that $r \leq k \leq d$ and every $\gamma > 0$, the family $\cP_{t,T}$ of $d$-variate real polynomials of degree $t = \tilde{O}(\sqrt{kr \log(1/\gamma)})$ and norm $T = d^{\tilde{O}(\sqrt{kr \log(1/\gamma)})}$ efficiently and uniformly $\gamma$-approximates the family $\cF_{\cQ_{r, k}}$ on the set $Y_k$.
\end{lemma}
\fi

\begin{remark}
Using the principle of inclusion-exclusion, the answer to a monotone $r$-of-$k$ query can be written as a linear combination of the answers to $k^{O(r)}$ monotone $k$-way disjunctions. Thus, a sanitizer that is $(\alpha/k^{O(r)}, \beta)$-accurate for monotone $k$-way disjunctions implies a sanitizer that is $(\alpha, \beta)$-accurate for monotone $r$-of-$k$ queries. However, combining this implication with Theorem~\ref{thm:main1} yields a sanitizer with running time $d^{O(r \sqrt{k} \log(k/\beta))}$, which has a worse dependence on $r$ than what we achieve in Theorem~\ref{thm:main2}.
\end{remark}

\subsection{Releasing Decision Lists} \label{sec:dls}

A \emph{length-$k$ decision list} over $m$ variables is a function which can be written in the form
``if $\ell_1$ then output $b_1$ else $\cdots$ else if $\ell_k$
then output $b_k$ else output $b_{k+1}$,'' where each
$\ell_i$ is a boolean literal in $\{x_1, \dots, x_m\}$, and each $b_i$ is an output
bit in $\{0,1\}.$ Note that decision lists of length-$k$ strictly generalize $k$-way disjunctions and conjunctions.  We use $\mathrm{DL}_{k,m}$ to denote the set of all length-$k$ decision lists over $m$ binary input variables.

\begin{definition}[Evaluations of Length-$k$ Decision Lists] Let $k, m \in \N$ such that $k \leq m$ and $\cX = \mathrm{DL}_{k,m}$.
The query set $\cQ_{\mathrm{DL}_{k,m}} = \set{q_y}_{y \in \bits^m}$ of \emph{evaluations of length-$k$ decision lists} contains a query $q_y$ for every $y \in \bits^m$.  Each query is defined as $q_y(x) = x(y)$ where $x \in \mathrm{DL}_{k,m}$ is a length-$k$ decision list over $m$ variables.
The $\cQ_{\mathrm{DL}_{k,m}}$ function family $\cF_{\cQ_{\mathrm{DL}_{k,m}}} = \set{f_x}_{x \in \mathrm{DL}_{k,m}}$ contains functions $f_x(y) = x(y)$ for every $x \in \mathrm{DL}_{k,m}$.  That is, $\cF_{\cQ_{\mathrm{DL}_{k,m}}} = \mathrm{DL}_{k,m}$.
\end{definition}

We clarify that in this setting, the records in the database are length-$k$ decision lists over $\bits^m$ and the queries inputs in $\bits^m$.  Thus $|\cX| = |\mathrm{DL}_{k,m}| = m^{O(k)}$ and $|\cQ| = 2^m$.  Alternatively, $\cX = \bits^d$ for $d = k(\log m + 2) + 1$, since a length-$k$ decision list can be described using this many bits.
\ifnum\icalp=0
Klivans and Servedio \cite[Claim 5.4]{KlivansSe04} have shown that decision lists of length $k$ can be uniformly approximated to accuracy $\gamma$ by low-degree polynomials.  We give a self-contained proof of this fact in Appendix \ref{app:dlfact} for completeness.
\else
\fi

\begin{lemma}[\cite{KlivansSe04}] \label{dlfact} For every $k, m \in \N$ such that $k \leq m$ and every $\gamma > 0$, the family $\cP_{t, T}$ of $m$-variate real polynomials of degree $\tilde{O}\left(\sqrt{k} \log(1/\gamma)\right)$ and norm $T = m^{\tilde{O}\left(\sqrt{k} \log(1/\gamma)\right)}$ efficiently and uniformly $\gamma$-approximates the family $\cF_{\cQ_{\mathrm{DL}_{k,m}}} = \mathrm{DL}_{k,m}$ on all of $\bits^m$.
\end{lemma}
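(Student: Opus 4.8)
The plan is to follow the proof of Lemma~\ref{thm:approxdisj} almost verbatim, except that instead of composing a single Chebyshev polynomial with an affine form I take a short signed combination of such compositions, one for each ``prefix'' of the decision list. The only genuinely new ingredient is an exact identity that rewrites a decision list as a $\pm1$-combination of disjunctions of its first $i$ literals.

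Fix a record $x \in \mathrm{DL}_{k,m}$, written as ``if $\ell_1$ then $b_1$, else $\cdots$, else if $\ell_k$ then $b_k$, else $b_{k+1}$,'' where each $\ell_i$ is a literal over $y_1,\dots,y_m$ (so $\ell_i(y)\in\bits$ is an affine function of $y$ whose coefficient vector has $L_1$-norm at most $2$) and $b_i\in\bits$. For $0\le i\le k$ let $O_i(y)=\bigvee_{j=1}^{i}\ell_j(y)$, so $O_0\equiv 0$, and set $O_{k+1}\equiv 1$. On input $y\in\bits^m$ the output $x(y)$ equals $b_{j^\ast}$, where $j^\ast=j^\ast(y)$ is the least $i$ with $\ell_i(y)=1$ (and $j^\ast=k+1$ if there is none). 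Since $O_i(y)-O_{i-1}(y)=\mathbf{1}[j^\ast=i]$, summation by parts gives the identity
$$ x(y) \;=\; \sum_{i=1}^{k+1} b_i\bigl(O_i(y)-O_{i-1}(y)\bigr) \;=\; b_{k+1}+\sum_{i=1}^{k}(b_i-b_{i+1})\,O_i(y), $$
valid for every $y\in\bits^m$, with each $b_i-b_{i+1}\in\{-1,0,1\}$.

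Next I would approximate each prefix-disjunction $O_i$. For $y\in\bits^m$ the quantity $\sum_{j\le i}\ell_j(y)$ is an integer in $\{0,1,\dots,i\}\subseteq\{0,1,\dots,k\}$, and $O_i(y)=\mathbf{1}\bigl[\sum_{j\le i}\ell_j(y)\ge 1\bigr]$. Applying Fact~\ref{lemma:chebyshev} with accuracy parameter $\gamma'=\gamma/k$ yields a univariate polynomial $g_k$ of degree $t_k=O(\sqrt{k}\log(k/\gamma))$ whose coefficients have magnitude at most $2^{O(\sqrt{k}\log(k/\gamma))}$, with $g_k(0)=0$ and $g_k(s)\in[1-\gamma',1+\gamma']$ for $s\in\{1,\dots,k\}$; hence $\bigl|g_k\bigl(\sum_{j\le i}\ell_j(y)\bigr)-O_i(y)\bigr|\le\gamma'$ for every $y\in\bits^m$. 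I then define
$$ p_x(y) \;=\; b_{k+1}+\sum_{i=1}^{k}(b_i-b_{i+1})\,g_k\!\Bigl(\textstyle\sum_{j=1}^{i}\ell_j(y)\Bigr). $$
By the triangle inequality together with the identity above, $\max_{y\in\bits^m}|x(y)-p_x(y)|\le k\gamma'=\gamma$, and $p_x$ is an $m$-variate polynomial of degree $t_k=\tilde{O}(\sqrt{k}\log(1/\gamma))$, since each summand is $g_k$ composed with an affine form. This gives the approximation claim with the stated degree.

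It remains to bound the norm of $p_x$ and the running time of the construction, which is routine bookkeeping. The coefficient vector of $\sum_{j\le i}\ell_j(y)$ has $L_1$-norm at most $2k$, so by submultiplicativity of the $L_1$-norm under polynomial multiplication its $a$-th power has $L_1$-norm at most $(2k)^a$; thus $g_k\bigl(\sum_{j\le i}\ell_j\bigr)$ has coefficient $L_1$-norm at most $\sum_{a\le t_k}|c_a|(2k)^a\le k^{O(\sqrt{k}\log(k/\gamma))}$, and summing the at most $k$ terms (plus the constant) gives $\|p_x\|_\infty\le\|p_x\|_1=k^{\tilde{O}(\sqrt{k}\log(1/\gamma))}=m^{\tilde{O}(\sqrt{k}\log(1/\gamma))}$, using $k\le m$. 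For efficiency, the decision list $x$ is given explicitly (and $\log|\mathrm{DL}_{k,m}|=O(k\log m)$), $g_k$ is constructed in time $\poly(k,\log(1/\gamma))$ by Fact~\ref{lemma:chebyshev}, and expanding each $g_k\bigl(\sum_{j\le i}\ell_j(y)\bigr)$ into its monomial representation over $y_1,\dots,y_m$ and summing takes time polynomial in the number of monomials $\binom{m+t}{t}$; this meets the requirement of Definition~\ref{def:approxbypolys}. The only thing that really needs to be gotten right is the telescoping identity; everything after it parallels the proof of Lemma~\ref{thm:approxdisj}, the sole difference being the union bound over the $\le k$ prefixes, which forces $\gamma'=\gamma/k$ and is the source of the $\tilde{O}(\cdot)$ in both the degree and the norm.
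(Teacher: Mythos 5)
Your proof is correct, and it achieves the same bounds by the same overall strategy as the paper's: decompose the decision list into $O(k)$ simpler pieces, approximate each piece to error $\gamma/k$ using the Chebyshev construction of Fact~\ref{lemma:chebyshev}, and sum. The difference is in the decomposition. The paper follows Klivans--Servedio and writes $f(y)=\sum_{i=1}^{k+1} b_i\,(1-y_1)\cdots(1-y_{i-1})y_i$ (with the obvious convention for the last term), i.e.\ a sum of indicators that the decision ``falls'' at position~$i$, and approximates each such AND-type indicator by $h_k(z)=1-g_k(k-z)$ applied to an affine form that equals $k$ exactly when the indicator fires. You instead apply Abel summation to obtain $x(y)=b_{k+1}+\sum_{i=1}^{k}(b_i-b_{i+1})\,O_i(y)$ with $O_i=\bigvee_{j\le i}\ell_j$ a prefix disjunction, and then approximate each $O_i$ directly by $g_k\bigl(\sum_{j\le i}\ell_j\bigr)$, exactly as in Lemma~\ref{thm:approxdisj}. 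The two routes are dual (the paper's indicators are the differences of your prefix ORs), yield identical degree and norm bounds, and both require the union bound $\gamma'=\gamma/k$; yours is arguably marginally cleaner in that it reuses $g_k$ without introducing $h_k$, and it makes the reduction to the disjunction case (Lemma~\ref{thm:approxdisj}) more explicit. Your norm and efficiency bookkeeping is sound, and the telescoping identity you state is easily verified to be exact, so there is no gap.
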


We obtain Theorem \ref{thm:main3} of the introduction by combining Theorems~\ref{thm:releasepolys} and~\ref{dlfact}.  
\ifnum\icalp=0

\section{Generalizations and Limitations of Our Approach}
We note that the approach we take is not specific to low-degree polynomials.  Theorem~\ref{thm:releasepolys} extends to the case where the family $\cF_{\cQ}$ is efficiently and uniformly approximated on $Y_{\cQ}$ by linear combinations of functions from any efficiently computable set $\mathcal{S}$ (In the case of polynomials, $\mathcal{S}$ is the set of all monomials of total degree at most $t$).  The properties we require from the function family $\mathcal{S}$ are that it (1) it is relatively small (as it determines our running time and minimum database size), (2) for every $x \in \cX$, we can efficiently compute $\vec{b}_{x} \in \R^{\mathcal{S}}$ such that $\max_{y \in Y_{\cQ}} |f_{x}(y) - \sum_{s \in \mathcal{S}} \vec{b}_{x, s} \cdot s(y)| \leq \gamma$, and (3) $s(y)$ is small (say, at most $c$) for every $s \in \mathcal{S}$ and $y \in Y_{\cQ}$.  In the special case of approximation by $m$-variate real polynomials of degree $t$ and norm $T$, we can take $\mathcal{S}$ to be the set of monomials of total degree $t$, thus $|\mathcal{S}| = \binom{m+t}{t}$ and $c = 1$.  If we have those parameters, then similarly to Theorem~\ref{thm:releasepolys} we can obtain an $\eps$-differentially private sanitizer with running time $\poly(n, \log|\cX|, |\mathcal{S}|, \log T, \log(1/\eps))$ and accuracy $\alpha = \gamma + O(c T |\mathcal{S}|^2 \log(|\mathcal{S}|/\beta)/\eps n)$.

Unfortunately, it turns out that, for all of the query sets considered in this paper, there is not much to be gained from considering more general functions families $\mathcal{S}$. Indeed, Klivans and Sherstov \cite[Theorem 1.1]{KlivansSh10} show that even
if $\mathcal{S}$ consists of \emph{arbitrary} functions $s_1, \dots, s_{\ell}: \{0, 1\}^k \rightarrow \mathbb{R}$
whose linear combinations can uniformly approximate every monotone disjunction on $k$ variables to error $\pm 1/3$, then $\ell \geq 2^{\Omega(\sqrt{k})}$. Note that, up to logarithmic factors, this matches the dependence on $k$ of the upper bound of Lemma \ref{thm:approxdisj}. 
Moreover, Sherstov \cite[Theorem 8.1]{Sh11pattern} broadly extends the result of \cite{KlivansSh10} beyond disjunctions, to \emph{pattern matrices} of any Boolean function $f$
with high \emph{approximate degree}. Roughly speaking, the pattern matrix of $f$ corresponds to the query set consisting of all restrictions of $f$, with some variables possibly negated (see 
\cite{Sh11pattern} for a precise definition). In particular, \cite[Theorem 8.1]{Sh11pattern} implies a function family independent lower bound for non-monotone $r$-of-$k$ queries;
up to logarithmic factors, this lower bound matches the dependence on $r$ and $k$ of the upper bound of Lemma \ref{thm:approx2}.

We also note that our algorithm can be implemented in Kearns' \emph{statistical queries model}.  In the statistical queries model, algorithms can only access the database through a \emph{statistical queries oracle} $\mathrm{STAT}(D, \tau)$ that takes as input a predicate $q \from \bits^d \to [0,1]$ and returns a value $a$ such that $\Ex{x \in D}{q(x)} \in [a \pm \tau]$.  It can be verified that our algorithm can be implemented using $\binom{m+t}{t}$ queries to $\mathrm{STAT}(D, \tau)$ for $\tau = 1/\binom{m+t}{t} T^2$, using one query to obtain each noisy coefficient (scaled to the range $[0,1]$).  In this case of $k$-literal conjunctions, our algorithm makes $d^{O(\sqrt{k}\log(1/\alpha))}$ queries and requires $\tau = 1/d^{O(\sqrt{k}\log(1/\alpha))}$.

Connections between differentially private data analysis and the statistical queries model have been studied extensively~\cite{BlumDwMcNi05, KasiviswanathanLeNiRaSm07, GuptaHaRoUl11}.  Gupta et.~al.~showed that lower bounds for \emph{agnostic learning} in the statistical queries model also imply lower bounds for data release in the statistical queries model.  Specifically, applying a result of Feldman~\cite{Feldman09}, they show that for every $\alpha = \alpha(d) = o(1)$, for $k = k(d) = \Theta(\log(1/\alpha(d))) = \omega(1)$, and for every constant $c > 0$, there is no algorithm that makes $d^c$ queries to $\mathrm{STAT}(D, d^{-c})$, for $D \in (\bits^d)^*$, and releases $\alpha$-accurate answers to all monotone $k$-way disjunction queries on $D$.  Notice that our algorithm only makes a polynomial number of statistical queries when \emph{both} $k$ and $\alpha$ are constant.

\else
\fi

\section*{Acknowledgements}

We thank Vitaly Feldman, Moritz Hardt, Varun Kanade, Aaron Roth, Guy Rothblum, and Li-Yang Tan for helpful discussions.

\bibliographystyle{splncs03}
\bibliography{./privacyrefs}
\appendix

\ifnum\icalp=0
\section{Polynomial Approximation of Decision Lists}
\label{app:dlfact}
\begin{lemma}[Theorem~\ref{dlfact} restated, \cite{KlivansSe04}]  For every $k, m \in \N$ such that $k \leq m$ and every $\gamma > 0$, the family $\cP_{t, T}$ of $m$-variate real polynomials of degree $\tilde{O}\left(\sqrt{k} \log(1/\gamma)\right)$ and norm $T = m^{\tilde{O}\left(\sqrt{k} \log(1/\gamma)\right)}$ efficiently and uniformly $\gamma$-approximates the family $\cF_{\cQ_{\mathrm{DL}_{k,m}}} = \mathrm{DL}_{k,m}$ on all of $\bits^m$.
\end{lemma}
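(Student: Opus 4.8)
The plan is to write a length-$k$ decision list \emph{exactly} as a linear combination of $k+1$ extremely simple threshold events, each depending on the query variables $y$ only through an affine form, and then to approximate each event by composing the Chebyshev polynomial of Lemma~\ref{lemma:chebyshev} with that affine form. (Equivalently, each event is the negation of a disjunction of at most $k$ literals in the $y_v$'s, so one could instead invoke Lemma~\ref{thm:approxdisj} as a black box; carrying the composition out explicitly just keeps the degree, norm, and running-time accounting transparent.)

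First I would fix a list $x \in \mathrm{DL}_{k,m}$, written ``if $\ell_1$ then $b_1$ else $\cdots$ else if $\ell_k$ then $b_k$ else $b_{k+1}$'' with literals $\ell_1,\dots,\ell_k$ over $y_1,\dots,y_m$ and set $b_0 := 0$. For $i = 1,\dots,k+1$ let $A_i(y) = \sum_{j < i} \ell_j(y)$, an affine form in $y$ with integer coefficients of absolute value at most $k$ and at most $m+1$ terms; a negated literal $\ell_j = 1 - y_v$ only feeds the constant term, so no doubling of the data universe is needed. On $\bits^m$ the form $A_i$ takes values in $\set{0,1,\dots,k}$, vanishing exactly when the list ``reaches'' rule $i$. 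Writing $T_i(y) = \mathbf{1}_{A_i(y) = 0} = \mathbf{1}_{j(y) \ge i}$, where $j(y) \in \set{1,\dots,k+1}$ is the index of the first satisfied literal (or $k+1$ if none), a one-line telescoping argument gives the exact identity $x(y) = \sum_{i=1}^{k+1}(b_i - b_{i-1}) T_i(y)$ for every $y \in \bits^m$, since the right-hand side collapses to $\sum_{i=1}^{j(y)}(b_i - b_{i-1}) = b_{j(y)}$.

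Next I would approximate each $T_i$. Let $\gamma' := \gamma/(k+1)$ and let $g_k$ be the polynomial of Lemma~\ref{lemma:chebyshev} for accuracy $\gamma'$, of degree $t_k = O(\sqrt k \log(1/\gamma'))$ and with coefficients of absolute value at most $2^{O(\sqrt k \log(1/\gamma'))}$, satisfying $g_k(0) = 0$ and $g_k(v) \in [1-\gamma', 1+\gamma']$ for $v \in \set{1,\dots,k}$. Define $\widetilde T_i(y) := 1 - g_k(A_i(y))$ and output the coefficient vector of $p_x(y) := \sum_{i=1}^{k+1}(b_i - b_{i-1})\widetilde T_i(y)$. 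Since $A_i(y) \in \set{0,\dots,k}$ we get $|\widetilde T_i(y) - T_i(y)| \le \gamma'$ for every $y \in \bits^m$, hence $|p_x(y) - x(y)| \le \sum_i |b_i - b_{i-1}|\gamma' \le (k+1)\gamma' = \gamma$. Because $A_i$ is affine, the degree of $p_x$ is at most $t_k = \tilde{O}(\sqrt k \log(1/\gamma))$. For the norm: expanding any power of $A_i$ up to $A_i^{t_k}$ yields monomial coefficients whose absolute values sum to at most $((m+1)k)^{t_k}$, so every coefficient of $g_k(A_i(y))$ — and hence of $p_x$ after summing the $k+1$ terms — is bounded by $(k+1)(t_k+1)\cdot 2^{O(\sqrt k \log(1/\gamma'))}\cdot((m+1)k)^{t_k}$, which collapses to $m^{\tilde{O}(\sqrt k \log(1/\gamma))}$ once we use $k \le m$. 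Efficiency is routine: $g_k$ is built in time $\poly(k,\log(1/\gamma'))$ (Lemma~\ref{lemma:chebyshev}), and forming and adding the $O(k\,t_k)$ powers of the $A_i$'s, each with at most $\binom{m+t_k}{t_k}$ monomials, takes time $\poly(\binom{m+t_k}{t_k}, k, \log(1/\gamma))$, which is within the budget of Definition~\ref{def:approxbypolys} (note $\log|\mathrm{DL}_{k,m}| = O(k\log m)$). This establishes Lemma~\ref{dlfact}, and Theorem~\ref{thm:main3} follows by combining it with Theorem~\ref{thm:releasepolys}.

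The only genuinely non-mechanical step is the norm bound: one must check that the exponential-in-$\sqrt k$ growth of the Chebyshev coefficients, compounded with the $((m+1)k)^{t_k}$ blow-up from raising an affine form in $\Theta(m)$ variables to the power $t_k$, still collapses to $m^{\tilde{O}(\sqrt k \log(1/\gamma))}$. This is precisely where the hypothesis $k \le m$ is used (it is what lets both $2^{O(\sqrt k\log(1/\gamma'))}$ and the $k^{t_k}$ factor be absorbed into $m^{O(t_k)}$). Everything else — the telescoping identity, the per-term error bound, and the running-time count — is immediate.
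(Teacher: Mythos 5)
Your proof is correct and follows essentially the same route as the paper's: decompose the decision list into $k+1$ terms each depending on $y$ only through an affine form, approximate each term by composing the Chebyshev polynomial of Lemma~\ref{lemma:chebyshev} with that affine form, and then carry out the compounded-error, degree, norm, and running-time accounting. The only cosmetic difference is your Abel-summation (telescoping) decomposition $x(y) = \sum_{i=1}^{k+1}(b_i - b_{i-1})\mathbf{1}[A_i(y)=0]$ paired with $1 - g_k(A_i(\cdot))$, whereas the paper uses the direct ``first rule that fires'' decomposition and plugs a shifted affine form into $h_k(z) = 1 - g_k(k-z)$; the resulting bounds are identical.
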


\begin{proof} By Theorem \ref{thm:releasepolys}, it is sufficient to show that if $f(y)$ is any length-$k$ decision list over $\bits^m$, then $f(y)$ can be $\gamma$-approximated 
by an explicit family of polynomials of degree $\tilde{O}(\sqrt{k} \log(1/\gamma))$ and norm $m^{\tilde{O}(\sqrt{k} \log(1/\gamma))}$. To this end, write $f(y)$ in the form,
``if $\ell_1$ then output $b_1$ else $\cdots$ else if $\ell_k$
then output $b_k$ else output $b_{k+1}$,'' where each
$\ell_i$ is a boolean literal, and each $b_i$ is an output
bit in $\{0,1\}.$ Assume for notational convenience that $\ell_i=y_i$ for all $i$; the proof for general decision lists is similar.

Following \cite[Theorem 8]{KlivansSe04}, we may write
\begin{align*}
f(y) 
&= b_1y_1 + b_2(1-y_1)y_2 \\
&+ \dots \\
&+ b_k(1-y_1) \dots(1-y_{k-1})y_k + b_{k+1}(1-y_1)(1-y_2)\dots(1-y_k).
\end{align*}

At a high level, we treat each term of the above sum independently, using a transformation of the Chebyshev polynomials to approximate each term
within additive error $\gamma/k$. This ensures that 
that the sum of the resulting polynomials approximates
$f_{\cQ, x}(y)$ within additive error $\gamma$ as desired. Details follow.

Let $g_k$ be the polynomial described in Lemma \ref{lemma:chebyshev} with error parameter $\gamma'=\gamma / k$. Then the polynomial $h_k(z)=1-g_k(k-z)$ satisfies the following properties:

\begin{enumerate}
\item The degree of $h_k$ is $t_k = O(\sqrt{k} \log(k/\gamma))$,
\item  for every $i \in \set{0,1,\dots, t_k}$, the $i$-th coefficient, $c_i$, of $h_k$ has magnitude $|c_i| \leq  2^{\tilde{O}(\sqrt{k} \log(1/\gamma))}$,
\item  $h_k(k) = 1$, and
\item for every $z \in \set{0, \dots, k-1}$, $|h_k(z)| \leq \frac{\gamma}{k}$.
\end{enumerate}
Moreover, $h_k$ can be constructed in time $\poly(k, \log(1/\gamma))$ (e.g.~using linear programming, though faster algorithms for constructing $h_k$ are known).

Consider the polynomial $p_x$ defined as 

\[\begin{array}{lll}
p_x(y) & = & b_1 \cdot h_k\left(y_1 + (k-1)\right) + b_2 \cdot h_k\left((1-y_1)+y_2 + (k-2)\right) + \dots  \\
& + & b_k \cdot h_k\left((1-y_1) + (1-y_2) + \dots + (1-y_{k-1}) + y_k\right) \\
  & + & b_{k+1} \cdot h_k\left((1-y_1) + (1-y_2) + \dots + (1-y_k)\right).\end{array}\]

It is easily seen that $p_x$ is has degree $O(\sqrt{k} \log(k/\gamma))$ and the absolute value of each of the coefficients is at most $2^{\tilde{O}(\sqrt{k} \log(1/\gamma))}$. Moreover, $|p_x(y) - f(y)| \leq \gamma$ for all $y \in \{0, 1\}^m$. This completes the proof.

\section{Polynomial Approximation of $r$-of-$k$ Queries}
\label{sherstovapp}
\begin{lemma}[Lemma~\ref{lemma:paturi} restated,\cite{Sherstov09}]
For every $r,k \in \N$ such that $r \leq k$ and $\gamma > 0$, there exists a univariate polynomial $g_{r,k}\from \R \to \R$ of degree $t_{r,k}$ such that $g_{r,k}(x) = \sum_{i=0}^{t_k} c_i x^i$ and:
\begin{enumerate}
\item $t_{r,k} = O\left(\sqrt{rk} \log(k) + \sqrt{k \log(1/\gamma)\log(k)}\right)$ ,
\item for every $i \in \set{0,1,\dots, t_k}, |c_i| \leq  2^{\tilde{O}(\sqrt{kr \log(1/\gamma)})}$,
\item for every $x \in \set{0,1,\dots,r-1}$, $-\gamma \leq g_{r,k}(x) \leq \gamma$, and
\item for every $x \in \set{r, \dots, k}$, $1-\gamma \leq g_{r,k}(x) \leq 1+\gamma$.
\end{enumerate}
Moreover, $g_{r,k}$ can be constructed in time $\poly(k, r, \log(1/\gamma))$ (e.g.~using linear programming).
\end{lemma}
We give details for the construction of \cite[Lemma 3.11]{Sherstov09}. We do not prove the approximation properties (which is done in~\cite{Sherstov09}), but just confirm the sizes of the coefficients.  
%This construction applies more generally to give approximating 
%polynomials of nearly optimal degree for arbitrary symmetric boolean functions, but we specialize our 
%presentation to our context, namely $\gamma$-approximating the family $\cF_{\cQ_{r,k}}$ over $Y_k$. 
%We show that every function of the form $f_x(y_1, \dots, y_d) = \mathbf{1}_{\sum_{j=1}^{d} y_j x_j \geq r}$ with $y_j=1$ for at most $k$ coordinates $j$,
%possesses a $\gamma$-approximating polynomial of degree $\tilde{O}(\sqrt{kr \log(1/\gamma)})$ and 
%coefficients whose absolute values sum to $2^{\tilde{O}(\sqrt{kr \log(1/\gamma)})}$. 
The result is trivial if $r=\tilde{\Omega}(k)$ or $\log(1 / \gamma) = \tilde{\Omega}(k)$ since every symmetric function on $\set{0,1,\dots,k}$
has an exact representation as a polynomial of degree $k$, so assume $r < k / \log^2 k$ and $\log(1/\gamma) \leq k / \log k$, with $k \geq 2$. 

%For brevity, we omit the proof that $p$ $\gamma$-approximates $f_x$, as this is proved explicitly in \cite{Sherstov}, and focus on bounding the absolute value of the coefficients.

Let $T_z$ be the degree $z$ Chebyshev polynomial of the first kind (Fact~\ref{fact:coeff}). We will use the following well-known properties of Chebyshev polynomials.

\begin{fact} \label{fact:coeff} The Chebyshev polynomials of the first kind satisfy the following properties.

\begin{enumerate}
\item Each coefficient of $T_z$ has absolute value at most $3^z$.
\item $T_z(t) \geq 1$ for all $t \geq 1$. 
\end{enumerate}
\end{fact} 

Let $\Delta = \big\lceil \frac{\log(k / \gamma)}{\log n}\big\rceil$, and $z=3\Delta\big\lceil \log k\big\rceil$. 
%The construction builds a univariate polynomial $p(t)$ such that the multivariate polynomial $p(\sum_{j=1}^d y_jx_j)$ $\gamma$-approximates $f_x(y_1, \dots, y_d)$. 
The construction proceeds in several steps, with the final polynomial $p$ defined in terms of multiple intermediate polynomials. 

\begin{enumerate}
\item For any fixed integer $0<\ell<r$,  let 
$$p_{1, \ell}(t) = T_{\big\lceil\sqrt{\frac{k-\ell-\Delta}{\ell+\Delta}}\big\rceil}\left(\frac{t}{k-\ell-\Delta}\right).$$ 
All the coefficients of $p_{1, \ell}(t)$ are bounded in absolute value by $3^{\text{deg}(p_{1, \ell})} = 2^{\tilde{O}\left(\sqrt{\frac{k-\ell-\Delta}{\ell+\Delta}}\right)}$.  

\item Define $$p_{2, \ell}(t) = \left(\frac{p_{1, \ell}(t)-p_{1, \ell}(k-\ell)}{8}\right)^2.$$ The coefficients of $p_{2, \ell}$ are bounded in absolute value by 
$$k^{O(\deg(p_{2, \ell}))}= k^{O(\deg(p_{1, \ell}))} = 2^{\tilde{O}\left(\sqrt{\frac{k-\ell-\Delta}{\ell+\Delta}}\right)}.$$

\item Define the polynomial $$p_{3, \ell}(t):=T_{\big\lceil\frac{8(d+1)(\ell+\Delta)}{\sqrt{2}\Delta}\big\rceil}\left(1 + \frac{\Delta^2}{64(\ell + \Delta)^2}-p_2(t)\right).$$

$p_{3, \ell}(t)$ has degree at most 
\begin{align*}
\deg(p_{3, \ell})
&={} 22(d+1)\sqrt{k(\ell+\Delta)}/\Delta \\
&={} O\left(\sqrt{k(\ell+\Delta)} \log k\right) = O\left(\sqrt{k \ell}\log k + \sqrt{k \log k \log(1/\gamma)}\right).
\end{align*}
Noting that $\frac{\Delta^2}{64(\ell + \Delta)^2} < 1$, it is clear that the coefficients of $p_{3, \ell}(t)$ are bounded in absolute value by $k^{O(\deg(p_{3, \ell}))}$.

\item Define the polynomial $p_{4, \ell}(t)=\frac{p_{3, \ell}(t)}{p_{3, \ell} (k-\ell)}$. Since $$p_3(k-\ell) = T_{\big\lceil{\frac{8(d+1)(\ell+\Delta)}{\sqrt{2}\Delta}\big\rceil}}\left(1 + \frac{\Delta^2}{64(\ell + \Delta)^2}-0\right),$$ by Part 2 of Fact \ref{fact:coeff}, the absolute values of the coefficients of $p_{4, \ell}$ are no larger than those of $p_3$.

\item Define the univariate polynomial $$q_1(t)=\prod_{i=-(\Delta-1),\dots,(\Delta-1), i \neq 0} (t-(k-\ell-i)).$$ 

The coefficients of $q_1(t)$ have absolute value at most $k^{O(\Delta)}$.

\item Define the polynomial 

$$ q_{2, \ell}(t) = \frac{1}{q_1(k-\ell)} \cdot p_{4, \ell}(t)q_1(t).$$

Noticing that $|q_1(k-\ell)| > 1$, it is clear that the degree of $q_{2, \ell}$ is $O(\Delta + \deg(p_{4, \ell}))$, and the absolute values of the coefficients of $q_{2, \ell}(t)$ are at most
$2^{\tilde{O}(\Delta + \deg(p_{3, \ell}))}$. 

\item Define the polynomial $q_{3, \ell}(t) = q_{2, \ell}(k-t)$. The absolute value of the coefficients of $q_{3, \ell}$ are also 
bounded above by $2^{\tilde{O}(\Delta + \deg(p_{3, \ell}))}$. 

Sherstov's arguments show that $q_{3, \ell}$ is a $\frac{\gamma}{k}$-approximation to the function $$\text{EXACT}_{\ell}(t) = \begin{cases} 1 \text{ if } t = \ell\\ 0 \text{ for all other } t \in \{0, \dots, k\}\end{cases}.$$

\item The final polynomial $p(t)$ is defined as 
$$p(t)=1-\sum_{\ell \in \{0, 1, \dots, r-1\}} q_{3, \ell}(t).$$
The coefficients of $p(t)$ have absolute value at most $2^{\tilde{O}(\Delta + \deg(p_{3, r}))}$, and $p$ has the desired degree.
Moreover, $p(t)$ $\gamma$-approximates the function $\mathbf{1}_{t \geq r}(t)$ on the set $\{0, \dots, k\}$, since the $j$th term in the sum 
$\sum_{\ell \in \{0, 1, \dots, r-1\}} q_{3, \ell}(t)$ is a  $\gamma / k$ approximation for the function
$\text{EXACT}_{j}(t)$.

Since $\log(1/\gamma) < \frac{k}{\log k}$, it holds that $\Delta < \sqrt{k \log 1/\gamma}$. Thus, the coefficients of $p(t)$ have absolute value at most $2^{\tilde{O}(\deg(p_{3, r}))} = 2^{\tilde{O}(\sqrt{k r} + \sqrt{k \log(1/\gamma)})}$.

\end{enumerate}

\end{proof}
\else
\fi

\end{document}